\newtheorem{definition}{Definition}
\newtheorem{theorem}{Theorem}
\newtheorem{lemma}{Lemma}
\begin{document}

\title{Adaptive DNN Surgery for Selfish Inference Acceleration with On-demand Edge Resource}

\author{Xiang~Yang,
        Dezhi~Chen,
        Qi~Qi,~\IEEEmembership{Senior~Member,~IEEE},
        Jingyu~Wang,~\IEEEmembership{Senior~Member,~IEEE},
        Haifeng~Sun,
        Jianxin~Liao,
        and~Song~Guo,~\IEEEmembership{Fellow,~IEEE}%
        \IEEEcompsocitemizethanks{
                \IEEEcompsocthanksitem Xiang Yang, Dezhi Chen, Qi Qi, Jingyu Wang, Haifeng Sun and Jianxin Liao are with the State Key Laboratory of Networking and Switching Technology, Beijing University of Posts and Telecommunications.
                E-mail: \{yangxiang, chendezhi, qiqi8266, wangjingyu, hfsun, liaojx\}@bupt.edu.cn.
                \IEEEcompsocthanksitem Song Guo is an IEEE Fellow (Computer Society) and an ACM Distinguished Member with the Department of Computing at The Hong Kong Polytechnic University.
                E-mail: cssongguo@comp.polyu.edu.hk
                \IEEEcompsocthanksitem Qi Qi and Jingyu Wang are the corresponding authors.}
}

\maketitle

\begin{abstract}
        Deep Neural Networks (DNNs) have significantly improved the accuracy of intelligent applications on mobile devices. DNN surgery, which partitions DNN processing between mobile devices and multi-access edge computing (MEC) servers, can enable real-time inference despite the computational limitations of mobile devices. However, DNN surgery faces a critical challenge: determining the optimal computing resource demand from the server and the corresponding partition strategy, while considering both inference latency and MEC server usage costs. This problem is compounded by two factors: (1) the finite computing capacity of the MEC server, which is shared among multiple devices, leading to inter-dependent demands, and (2) the shift in modern DNN architecture from chains to directed acyclic graphs (DAGs), which complicates potential solutions.

        In this paper, we introduce a novel Decentralized DNN Surgery (DDS) framework. We formulate the partition strategy as a min-cut and propose a resource allocation game to adaptively schedule the demands of mobile devices in an MEC environment. We prove the existence of a Nash Equilibrium (NE), and develop an iterative algorithm to efficiently reach the NE for each device. Our extensive experiments demonstrate that DDS can effectively handle varying MEC scenarios, achieving up to 1.25$\times$ acceleration compared to the state-of-the-art algorithm.
\end{abstract}

\begin{IEEEkeywords}
        Article submission, IEEE, IEEEtran, journal, \LaTeX, paper, template, typesetting.
\end{IEEEkeywords}

\section{Introduction}
\IEEEPARstart{O}{ver} the past several years, a multitude of intelligent applications have emerged on mobile devices, including unmanned aerial vehicles and wearable devices. These developments are attributed to advances in deep neural network (DNN) models \cite{iot:uav,koch2019reinforcement,self-driving,ubi-ar}. However, due to the computational limitations of mobile devices, on-device DNN inference is often challenging. A common workaround, referred to as \textit{DNN surgery}, involves distributing the inference task between the mobile device and a cloud server equipped with high-performance GPUs \cite{sigarch:neuro,infocom19:mincut,ubicom20:faster-infer,9155237}. Under this scheme, mobile devices gather and process raw data using early DNN layers to produce intermediate results (\emph{features}). These features are then forwarded to the cloud, which completes the final computations and returns the inference results to the mobile device.

DNN surgery partitions the DNN model for deployment on both the edge and the cloud, thereby capitalizing on the server's computational resources. In comparison to on-device inference \cite{chen2016eyeriss,yang2017designing}, DNN surgery accelerates the inference task on mobile devices. Furthermore, when compared to directly uploading raw data to the cloud \cite{ubi-ar}, DNN surgery compresses and encrypts the raw data via the processing of initial neural layers, thereby enhancing user privacy.

However, DNN surgery does introduce communication costs between the mobile device and the cloud, which can impact overall inference latency. Network latency can be substantial between mobile devices and the cloud, and bandwidth is often limited \cite{zhou2019edge}, leading to uncertainties in real-time inference. Conversely, Multi-access Edge Computing (MEC) offers cloud computing capabilities at the edge of the cellular network, thereby reducing latency and providing high bandwidth. By moving services and applications closer to the user, MEC improves user experiences, particularly for delay-sensitive applications \cite{kekki2018mec,8736011,ft:survey}.

In this study, we enhance the DNN surgery technology by adapting it to the MEC scenario to ensure real-time inference. Unlike an exclusively cloud-based infrastructure, the key distinction is the highly \emph{on-demand} nature of MEC server service. Given that the MEC server only caters to the computing needs of nearby mobile devices, and the locations of these devices are uncertain, it cannot continuously reserve idle resources for any specific device. Consequently, the computing resources that the current MEC server can provide for an inference request are unpredictable. However, prior research \cite{sigarch:neuro,infocom19:mincut,ubicom20:faster-infer,8941306,ubi-edge-cloud} has uniformly assumed a fixed amount of server-provided computing resources for each mobile device, which is challenging to attain in the MEC scenario.

\begin{figure*}
        \centering
        \includegraphics[width=0.8\linewidth]{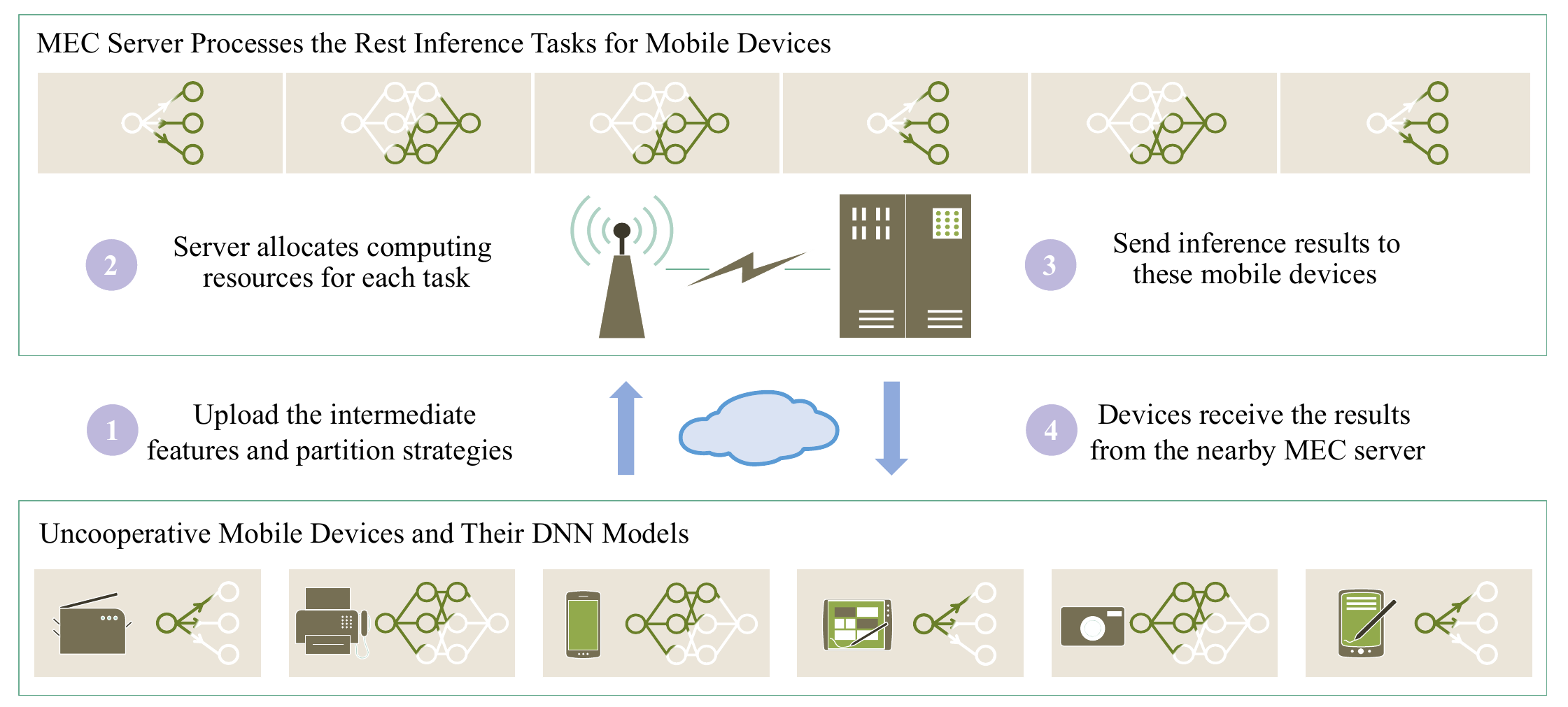}
        \caption{Illustration of a typical MEC scenario where multiple independent mobile devices collectively utilize the computing resources provided by a single MEC server.}
        \label{fg:network-topology}
\end{figure*}

Fig. \ref{fg:network-topology} depicts a typical MEC scenario where multiple independent mobile devices are connected to the same MEC server, collectively utilizing its computing resources. Each device executes a distinct inference task. For instance, a virtual reality device may run YOLO \cite{yolo} for object detection, while a smartphone might employ ResNet \cite{resnet} for junk mail classification. These devices pay charges to leverage the MEC server's computing resources and accelerate their respective inference tasks based on specific requirements. Each device aims to minimize its overall inference cost, including the inference latency, associated cost of using the MEC server and so on. During the inference process, while there is no direct interaction between devices, their fluctuating demands (required computing resources) impact the performance of other devices sharing the MEC server's resources.

Implementing DNN surgery within the MEC scenario presents several challenges. The mobile environment is characterized by a high level of complexity, with devices utilizing various DNN models, a diversity of computing resources and uplink rates, and ultimately, distinct demands. Moreover, each mobile device only care about its overall inference cost, and can enter or exit the MEC server's service area unpredictably. (1) \emph{The centralized scheduling of these selfish mobile devices to accommodate their varying demands proves challenging.} Preceding algorithms \cite{infocom19:mincut,ubicom20:faster-infer} necessitate each mobile device to have precise knowledge of the available computing resources that the server can provide. However, the server's workload fluctuates constantly due to the dynamic demands of mobile devices. (2) \emph{For mobile devices, making decisions under unpredictable server workload conditions is difficult.} Further, the varying required computing resources impact the choice of DNN partition strategy, and vice versa. Additionally, the characteristics of DNN models used in inference tasks pose further obstacles. Firstly, DNNs comprise a broad variety of neural layers, each with distinct computational and communication overheads. Secondly, contemporary DNN models are structured as directed acyclic graphs (DAGs), not chains. Therefore, (3) \emph{determining the optimal required computing resources and partition strategy for a given mobile device at the same time is complex.}

We propose the \emph{Decentralized DNN Surgery} (DDS) framework to address these challenges. DDS ascertains each mobile device's demand through a resource allocation game. In this game, each device requests a portion of computing resources from the MEC server at a base price. If the total requested resources surpass the MEC server's capacity, the server raises its service price for profit purposes. Consequently, each mobile device's demand influences others through the MEC server's service price, necessitating a compromise among devices. Each device's DNN and corresponding partition strategy are formulated as a DAG and an edge cut of the DAG, respectively. The allocated computing resources from the MEC server are employed to identify the best partition strategy by establishing a latency graph based on the original DNN graph and applying a min-cut algorithm on the latency graph. This paper demonstrates the equivalence between the minimum edge cut and the optimal partition strategy. We deduce the existence of a unique Nash Equilibrium (NE) for all mobile devices and propose an iterative algorithm executed on every mobile device to efficiently converge to the NE.

DDS is a lightweight framework necessitating minimal communication among mobile devices and virtually no computational overhead. Furthermore, DDS has an $O(1)$ computational complexity as the number of mobile devices participating in the resource allocation game increases. Thus, it can be readily deployed in most real-world applications. We evaluate DDS on a simulated cluster with up to 100 mobile devices. The experimental results reveal that DDS improves throughput by $1.25 \times$ compared to state-of-the-art algorithms.

In summary, the following contributions are offered in this paper:
\begin{itemize}
        \item Adaptation of DNN surgery to MEC scenarios where multiple mobile devices share computing resources, filling a gap in the current literature on DNN surgery algorithms.
        \item Formulation of DNN surgery with multiple devices as a resource allocation game, from which the existence of a unique NE is deduced.
        \item Development of the Decentralized DNN Surgery (DDS) framework which adaptively selects the best demand for computing resources and the optimal partition strategy for each mobile device.
        \item Evaluation of the proposed DDS framework through extensive simulations, demonstrating superior performance in comparison to existing algorithms.
\end{itemize}

\section{Background}
In this section, we provide a brief overview of the essential concepts related to the DNN model and DNN surgery.

\subsection{DNN as a Graph}

Early DNNs, such as LeNet \cite{lenet}, AlexNet \cite{alexnet}, and VGG \cite{vgg}, were composed of a series of contiguous basic neural layers, including convolutional layers, pooling layers, and fully connected layers. However, to achieve superior performance, modern DNN architectures, including ResNet \cite{resnet}, Inception \cite{inception}, and Visual Transformer \cite{vit}, are constructed from a sequence of meticulously designed blocks, superseding the use of individual neural layers. As a result, these model structures bear a greater resemblance to directed acyclic graphs (DAGs) than to chains, as illustrated in Fig. \ref{fg:graph-dnn}. The left part of the figure depicts the chain structure of VGG16, while the right side showcases the graph structures of Inception \cite{inception}. Each neural layer in a chain structure has, at most, one input and output layer. In contrast, layers in a graph structure can connect to multiple input or output layers. As such, the number of potential partition strategies for DNNs with graph structures significantly exceeds that for DNNs with simple chain structures.
\begin{figure}
        \centering
        \includegraphics[width=0.9\linewidth]{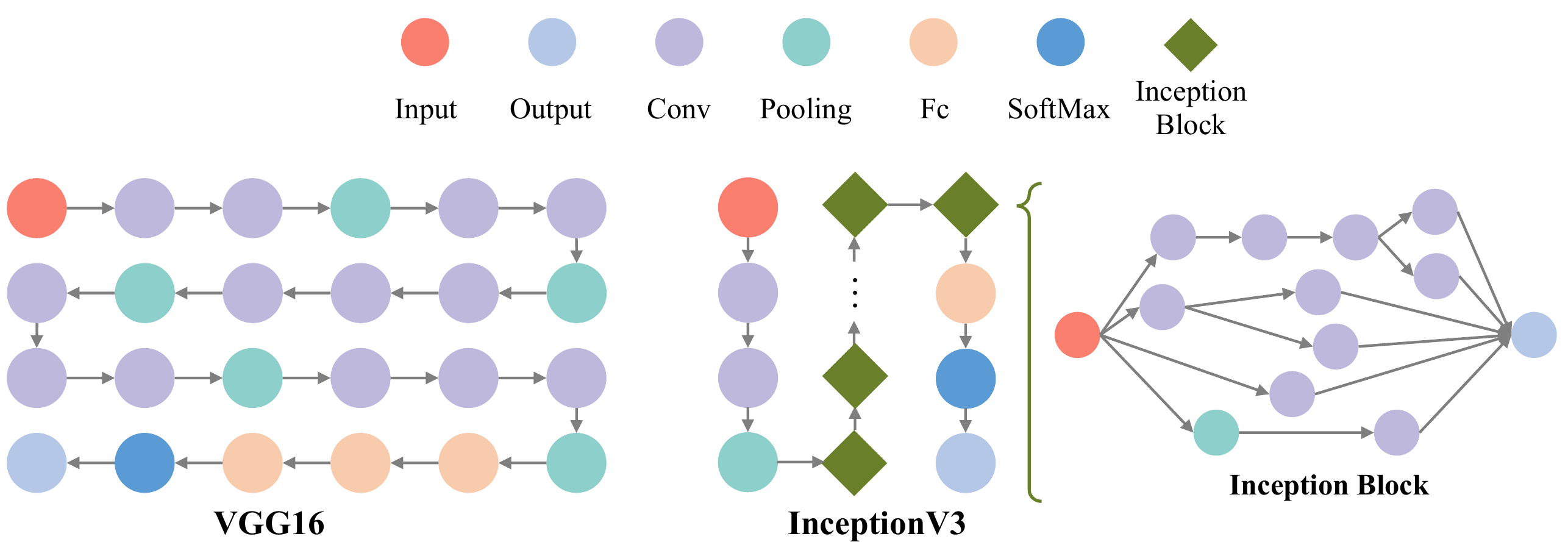}
        \caption{DNNs with chain and graph structures.} \label{fg:graph-dnn}
\end{figure}

\subsection{DNN Surgery}
Mobile devices can collect vast amounts of data but often lack sufficient computational resources to execute real-time inference locally. Conversely, uploading raw data to server introduces impractical network latency. DNN surgery addresses this issue by offloading part of the DNN inference computation to the mobile device. As raw data passes through neural layers, features are extracted and become significantly smaller in size than the original data \cite{infocom19:mincut,icdcs:pico}. Therefore, uploading intermediate features incurs lower network latency compared to raw data. Moreover, this approach reduces the computational load on the MEC server and encrypts the original data within the intermediate features, thereby preserving user privacy. However, modern DNN structures are highly complex, and computation and communication overheads vary across different neural layers. Given the diverse requirements of numerous mobile devices, the main challenge lies in determining the optimal DNN partition strategy.

\section{System Model} \label{sec:cost-model}

This section outlines the system model in three distinct steps: (1) Formulating the cost function for the inference task when an optimal partition strategy is given; (2) Representing the partition strategy as a cut and identifying the optimal partition with a fixed allocation of computing resources from the MEC server; (3) Proposing a resource allocation game to determine the optimal resource demand for each mobile device.

\subsection{Cost Function for Inference}

We utilize a directed acyclic graph (DAG), denoted as $\mathcal{G} : <\mathcal{V}, \mathcal{E}>$, to construct the DNN employed by the mobile device. Here, $\mathcal{V} : \left\{v_1, \cdots, v_i, \cdots, v_n \right\}$ represents the neural layers within $\mathcal{G}$, with $v_1$ and $v_n$ symbolizing the source and sink vertices, respectively. Furthermore, $\mathcal{E} : \left\{ e:(u, v) | u, v \in \mathcal{V} \right\}$ signifies the data flow within $\mathcal{G}$, where an edge $(u, v)$ indicates that the output of layer $u$ serves as the input for layer $v$.

DNN surgery involves formulating a partition strategy to identify an edge cut, denoted as $\mathcal{C}$. Removing $\mathcal{C}$ splits the original graph $\mathcal{G}$ into two disjoint segments. The segment containing the source vertex is deployed on local mobile devices ($\mathcal{G}^l$), while the segment containing the sink vertex is executed on the MEC server ($\mathcal{G}^s$). Note that $\mathcal{G}^l$ and $\mathcal{G}^s$ may be empty, indicating that the entire DNN is executed either on the MEC server or the local mobile device.

Upon receiving raw input data, the mobile device initially processes the data locally through its partial DNN ($\mathcal{G}^l$), then forwards the intermediate feature and partition strategy ($\mathcal{C}$) to the server. The MEC server computes the final output of $\mathcal{G}$ using $\mathcal{G}^s$ and dispatches the result back to the mobile device. The total inference latency ($T$) consists of three components: the local mobile device's inference latency ($T^l$), the transmission time of the intermediate feature ($T^t$), and the server-side inference latency ($T^s$). Analogous to \cite{infocom19:mincut,icdcs:pico}, we train a regression model for each device, considering both the required computing resources of different layers and the computing capacity of the device to predict the execution time.

These components are formulated as follows. Let $\mathcal{D} : \left\{ d_1, d_2, \cdots, d_n \right\}$ represent all mobile devices utilizing the same MEC server ($s$). The term $E(d_i)$, where $E: \mathcal{D} \to R$, denotes the computing capacity of device $d_i$, expressed in \emph{FLOPS} (Floating Operations Per Second). Additionally, $f(v_i)$, where $f: \mathcal{V} \to R$, signifies the required \emph{FLOPs} (Floating Operations) of a neural layer $v_i$. The edge computing latency for device $d_i$ can be represented as:
\begin{equation}
        T^l(d_i) := \alpha_i \frac{\sum_{v_j \in \mathcal{G}^l} f(v_j)}{E(d_i)}
\end{equation}
where $\alpha_i$ is the parameter of the trained regression model.

In a similar vein, let $g(d_i)$, where $g: \mathcal{D} \to {R}$, denote that device $d_i$ has allocated $g(d_i)$ computing resources from the MEC server. The computing latency $T^s$ on the server is:
\begin{equation} \label{eq:tc}
        T^s(d_i, g(d_i)) := \alpha_s \frac{\sum_{v_j \in \mathcal{G}^s} f(v_j)}{g(d_i)}.
\end{equation}
Here, $\alpha_s$ represents the regression model parameter on the server.

Regarding the transmission time $T^t$, it is expressed as follows:
\begin{equation}
        T^t(d_i) := \frac{\sum_{e_i \in \mathcal{C}_i} o(e_i)}{b(d_i)}
\end{equation}
where $o: \mathcal{E} \to R$ denotes the feature size of each edge $e_i$ and $b: \mathcal{D} \to R$ is the bandwidth of device $d_i$.

The total inference latency $T$ under a specific partition strategy $\mathcal{C}$ can be represented as $T := T^l + T^t + T^s$. If $g(d_i)$ is fixed, there exists an optimal partition strategy $\mathcal{C}^\star$ which minimizes the total inference latency $T$. We denote the minimum inference latency as $T^\star$.

Due to the service of the MEC server, each mobile device also incurs a charge for the service. Let $\beta(g(d_i))$ represent the charge if $g(d_i)$ computing resources are used. The overall cost function of the inference task for each mobile device can be formulated as:
\begin{equation} \label{eq:cost-fucntion-naive}
        \mathcal{L}(d_i, g(d_i)) := T^\star(d_i, g(d_i)) + \gamma \beta(g(d_i)),
\end{equation}
which comprises the optimal inference latency and the service charge. Here, $\gamma$ is a hyperparameter.

\subsection{Defining the Partition Strategy as a Cut} \label{sec:graph-mincut}

\begin{figure}
        \centering
        \includegraphics[width=0.9\linewidth]{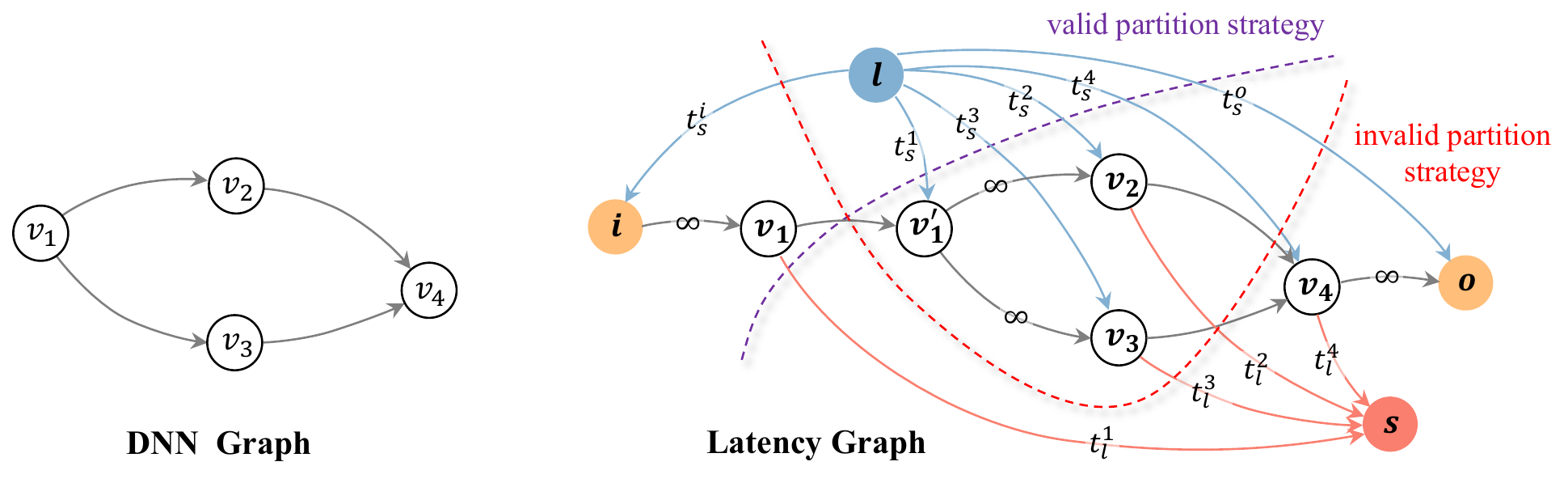}
        \caption{Illustration of latency graph $\mathcal{\hat{G}}$. $t_s$ represents the server inference latency for layer $v_i$, while $t_l$ denotes the local inference latency on the mobile device. $t_s^i$ and $t_s^o$ denote the transmission time of input data and inference result, respectively.}
        \label{fg:mincut-describe}
\end{figure}

Identifying the optimal inference latency $T^\star$ is equivalent to determining the optimal partition strategy $\mathcal{C}^\star$. This can be achieved by constructing a specific directed acyclic graph (DAG) $\mathcal{\hat{G}} : <\mathcal{\hat{V}}, \mathcal{\hat{E}}, \phi>$ derived from the original DAG $\mathcal{G}$, where $\phi: \mathcal{\hat{E}} \to R$ represents the capacity of every edge. We refer to $\mathcal{\hat{G}}$ as the \emph{latency graph} and construct it using the following steps:
\begin{enumerate}
        \item For each edge $e_i \in \mathcal{E}$, assign its capacity as $\frac{o(e_i)}{b(d_i)}$, which corresponds to the network latency for uploading the feature of edge $e_i$.
        \item For every vertex $v$ with multiple outgoing vertices, split $v$ into two vertices $v'$ and $v$. Add a new edge $(v', v)$ and set its capacity to $o((v, u))$. For each $(v, u)$, adjust its capacity to $\infty$.
        \item Introduce four virtual vertices $l$, $s$, $i$, and $o$. The vertices $l$ and $s$ denote the local mobile device and the MEC server, respectively. The vertices $i$ and $o$ are special vertices representing the captured raw data and the final output, respectively.
        \item Add two edges $(l, i)$ and $(i, v_1)$. The capacity of $(l, i)$ is the network latency for uploading the entire raw data, and the capacity of $(i, v_1)$ is set to $\infty$. Add two edges $(l, o)$ and $(v_n, o)$. The capacity of $(l, o)$ is the network latency for sending the final inference result to the mobile device, and the capacity of $(o, v_n)$ is set to $\infty$.
        \item For each vertex $v_i \in \mathcal{G}$, add two edges $(l, v_i)$ and $(v_i, s)$. The capacity of edge $(l, v_i)$ is assigned to $\alpha_s \frac{f(v_i)}{g(d_i)}$, representing the computing latency for layer $v_i$ on server $s$. For edge $(v_i, s)$, its capacity is assigned to $\alpha_i \frac{f(v_i)}{E(d_i)}$, denoting the local computing latency.
\end{enumerate}

It is worth noting that compared with \cite{infocom19:mincut}, we add new vertices $i$, $o$ and related edges. The vertex $i$ and edge $(i, v_1)$, whose capacity is $\infty$, ensure that the raw data is always captured in the mobile device. Similarly, the edges $(l, o)$ and $(v_n, o)$ ensure that the MEC server returns the inference results once the final layer $v_n$ is deployed on the MEC server. Furthermore, any vertex $v$ with multiple successors is replaced with $(v, v')$ to prevent redundant calculations during the estimation of network transmission time for $v$. Without these constraints, the partition strategy could be suboptimal.

The construction of a latency graph can be elucidated with a simple DNN graph with four vertices, as depicted in Fig. \ref{fg:mincut-describe}. Vertices $v_1$ and $v_4$ act as the source and sink, respectively. The blue edge cut in Fig. \ref{fg:mincut-describe} represents a partition strategy where $v_1$ are executed on the mobile device, while $v_2$, $v_3$ and $v_4$ are processed on the server. The capacity of $(v_1, v_1')$ equates to the network latency of the partition strategy. Notably, we replace $v_1$ with $(v_1, v_1')$, differing from the original DNN graph, to prevent the blue edge cut from passing through $(v_1, v_2)$ and $(v_1, v_3)$. The cumulative capacity of these edges would be twice the actual transmission. The edge cut includes edges $(v_1, s)$, whose capacity represents the local inference latency $T^l$. The remaining edges, $(l, v_2)$, $(l, v_3)$ and $(l, v_4)$, denote the server latency. Hence, the inference latency $T$ can be expressed as a summation of edge capacities:
\begin{equation}
        T(d_i, g(d_i)) = \sum_{e_i \in \mathcal{\hat{C}}} \phi(e_i),
\end{equation}
where $\mathcal{\hat{C}}$ is an edge cut of latency graph $\mathcal{\hat{G}}$. The corresponding partition strategy $\mathcal{C}$ can be symbolized as $\mathcal{\hat{C}} \cap ( \mathcal{E} \cup { (i, v_1) } )$.

Each partition strategy $\mathcal{C}$ for model $\mathcal{G}$ corresponds to an edge cut $\mathcal{\hat{C}}$ in the latency graph $\mathcal{\hat{G}}$. The corresponding inference latency of $\mathcal{C}$ is the sum of the capacities of edges in $\mathcal{\hat{C}}$. However, it cannot be assured that any edge cut of latency graph $\mathcal{\hat{G}}$ is a valid partition for the original graph $\mathcal{G}$. The edge cut represented by the red line in Fig. \ref{fg:mincut-describe} also segregates $\mathcal{\hat{G}}$ into two disjoint parts that contain either the source or sink vertex, thus constituting a valid edge cut. However, for the original graph $\mathcal{G}$, it places $v_1$ on the server but its successor $v_2$ on the mobile devices. Such a partition strategy lacks practical feasibility.

To address this, we introduce the concept of a valid cut:
\begin{definition}[Valid Cut] \label{def:valid-cut}
        Let $\mathcal{\hat{C}}$ be an edge cut for the latency graph $\mathcal{\hat{G}}$ that partitions the original DNN $\mathcal{G}$ into two segments, $\mathcal{G}^l$ and $\mathcal{G}^s$. We deem $\mathcal{\hat{C}}$ a valid cut if $u \in \mathcal{G}^s \Rightarrow v \in \mathcal{G}^s, \forall (u, v) \in \mathcal{\hat{C}} \cap \mathcal{E}$.
\end{definition}

Given that the objective of DNN surgery is to minimize inference latency, the minimum edge cut $\mathcal{\hat{C}^\star}$ of graph $\mathcal{\hat{G}}$ is the primary concern. Fortunately, we can establish that:
\begin{theorem} \label{th:divide}
        If $\mathcal{\hat{C}^\star}$ is the minimum edge cut of the latency graph $\mathcal{\hat{G}}$, then it must also be a valid cut for the DNN graph $\mathcal{G}$.
\end{theorem}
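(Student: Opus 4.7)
The plan is to argue by contradiction: assume $\mathcal{\hat{C}^\star}$ is a minimum edge cut but violates Definition~\ref{def:valid-cut}. Then by definition there exists a data-flow edge $(u,v) \in \mathcal{\hat{C}^\star} \cap \mathcal{E}$ with $u \in \mathcal{G}^s$ and $v \in \mathcal{G}^l$, i.e.\ an original DNN edge crossing the cut in the ``wrong'' direction (from the server side back to the mobile side). My goal is to show that the existence of such an edge either drives the cut capacity to $\infty$ or allows us to construct a strictly cheaper cut, contradicting the minimality of $\mathcal{\hat{C}^\star}$.

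First I would split on the out-degree of $u$ in $\mathcal{G}$. If $u$ has multiple outgoing edges, then step~2 of the latency-graph construction has split $u$ into $u'$ and $u$ and re-assigned every latency-graph edge $(u,u_k)$ to capacity $\infty$, routing the single transmission cost through the new finite edge $(u',u)$. Since $u \in \mathcal{G}^s$ places both halves of the split on the sink side while $v \in \mathcal{G}^l$ places $v$ on the source side, the edge $(u,v) \in \mathcal{\hat{C}^\star}$ has infinite capacity, making $\sum_{e \in \mathcal{\hat{C}^\star}} \phi(e) = \infty$. But the trivial all-local partition has finite capacity (a finite sum of local computing latencies plus the result-download term), so $\mathcal{\hat{C}^\star}$ cannot be minimum---a contradiction.

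Second, I would handle the case where $u$ has a single successor, so the edge $(u,v)$ in $\mathcal{\hat{G}}$ only has the finite capacity $o((u,v))/b(d_i)$. Here I would argue by a local modification: replace $\mathcal{G}^s$ by its descendant-closure in $\mathcal{G}$, i.e.\ move $v$ (and any subsequent descendants of $v$ still in $\mathcal{G}^l$) to the server side. I would track the change in cut capacity as a sum of three contributions for each migrated vertex $w$: the terminal-edge swap (losing $\alpha_i f(w)/E(d_i)$ from $(w,s)$ but gaining $\alpha_s f(w)/g(d_i)$ from $(l,w)$), the removal of data-flow edges like $(u,v)$ that previously crossed in the wrong direction, and the addition of any new boundary edges into $w$ from ancestors still on the mobile side. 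I would then show that because the descendant-closure operation \emph{only} shrinks the set of ``wrong-direction'' $\mathcal{E}$-edges and the auxiliary edges around $i,o$ adjust coherently (the infinite edges $(i,v_1)$ and $(o,v_n)$ pin $i$ and $o$ to the same sides as $v_1$ and $v_n$, so their contributions shift consistently), the net change is non-positive, contradicting strict minimality of an invalid $\mathcal{\hat{C}^\star}$.

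The main obstacle I expect is this second case: in the multi-successor case the contradiction is immediate through the $\infty$-capacity edge, but for a single-successor vertex there is no such lever, and the local swap does not obviously reduce capacity edge-by-edge. Making the modification work requires picking exactly the right set of vertices to migrate (essentially the descendant-closure of $\mathcal{G}^s$ in $\mathcal{G}$) and exploiting a submodularity-style cancellation between the terminal-edge gains and the transmission-edge losses, while carefully accounting for the auxiliary $i$ and $o$ vertices so that the modified cut is still a valid $(l,s)$-cut of $\mathcal{\hat{G}}$.
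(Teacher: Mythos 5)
Your overall skeleton---assume a violating edge $(u,v)\in\mathcal{\hat{C}^\star}\cap\mathcal{E}$ with $u\in\mathcal{G}^s$, $v\in\mathcal{G}^l$, then migrate the descendant closure of $v$ to the server side and show the capacity does not increase---is the same exchange argument the paper uses. But your proposal is missing the one idea the paper's argument actually turns on: a case analysis on the relative per-FLOP speeds, i.e.\ comparing $\alpha_i/E(d_i)$ against $\alpha_s/g(d_i)$, which by construction is the \emph{same} ratio for every compute-edge pair $(l,v_i)$, $(v_i,s)$. Your ``terminal-edge swap'' loses $\alpha_i f(w)/E(d_i)$ and gains $\alpha_s f(w)/g(d_i)$ for each migrated vertex $w$, and nothing in your argument makes the gain smaller than the loss. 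When the device's own compute is faster per FLOP than its allocated server share (a strong device with a small allocation $g(d_i)$), migrating vertices to the server strictly \emph{increases} these terms, the new boundary edges from local ancestors only add further capacity, and no contradiction is reached---yet the theorem must still be proved in that regime. The paper resolves this with a dichotomy: when the local side is at least as fast, the minimum cut is the all-local cut (the set of all edges into $s$), which is trivially valid, and the migration argument is invoked only in the opposite regime, where the compute swap is guaranteed non-increasing. Without this dichotomy your claim that ``the net change is non-positive'' is false in general, so your second case cannot be completed as stated. (The gap you yourself flag---the added boundary edges into the migrated set---is real, and in fairness the paper's own sketch also glosses over it, but it is secondary to the missing speed dichotomy, which decides whether the exchange can work at all.)

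Your first case (multi-successor $u$) is also not sound as written. Whether the $\infty$-capacity edge created by splitting $u$ lies in the cut depends on which side the auxiliary splitter vertex $u'$ is placed, and the partition of the \emph{original} DNN vertices into $\mathcal{G}^l$ and $\mathcal{G}^s$ does not pin $u'$ down: a minimum cut is free to place $u'$ so that no $\infty$ edge is paid---in particular, under the directed $s$-$t$ cut semantics computed by a max-flow algorithm, an edge crossing from the sink side back to the source side contributes nothing, and this ``backward edges are free'' loophole is precisely why invalid cuts are a threat in the first place. So you cannot conclude $\sum_{e\in\mathcal{\hat{C}^\star}}\phi(e)=\infty$, and the multi-successor case must be handled by the same exchange argument as everything else, which is what the paper does: its proof never case-splits on out-degree, only on the speed ratio.
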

\begin{proof}
        A simple proof sketch is provided here. An important property of $\mathcal{\hat{C}^\star}$ is that for all vertices $v_i \in \mathcal{G}$, the ratio $\frac{\phi((l, v_i)) }{\phi((v_i, s))}$ equals $ \frac{\alpha_i E(d_i)}{\alpha_s g(d_i)}$. We examine this property in two scenarios.

        If $\alpha_i E(d_i) \geq \alpha_s g(d_i)$, the minimum edge cut is equivalent to the set comprising all edges connected to $s$. This conforms to our definition and implies that the entire graph is placed on the local device.

        If $\alpha_i E(d_i) \geq \alpha_s g(d_i)$, suppose $\mathcal{\hat{C}^\star}$ is the minimum edge cut. If $\mathcal{\hat{C}^\star}$ is not a valid cut, there must exist an edge $(u, v) \in \mathcal{\hat{C}^\star}$ such that $u \in \mathcal{G}^s$ and $v \in \mathcal{G}^l$ according to Definition \ref{def:valid-cut}. As $u, v \in \mathcal{V}$ and $\mathcal{G}$ is a DAG, there must exist a path $p = {v, \cdots, v_n}$ that starts from $v$, with $v_n$ being the sink vertex of $\mathcal{G}$. This path $p$ can be extended into a smaller DAG with $v$ as the source vertex, which we denote as $\mathcal{G}' = <\mathcal{V}', \mathcal{E}'>$. Since $v \in \mathcal{G}^l$, we let $\Delta \mathcal{V} = \mathcal{V}' \cap \mathcal{V}^l$ denote the vertices that are part of both $\mathcal{G}'$ and $\mathcal{G}^l$. We then construct a new edge cut, $\mathcal{\hat{C}}'$, by removing $\Delta \mathcal{V}$ from $\mathcal{G}^l$ and adding it to $\mathcal{G}^s$. The new edge cut $\mathcal{\hat{C}}'$ adds edges $(l, v_i)$ and removes edges $(v_i, v_n)$ for all $v_i \in \Delta \mathcal{V} $. It also removes edges that connect $\Delta \mathcal{V}$ to the remaining part of the original graph $\mathcal{G}$, representing the additional network latency. As $\alpha_i E(d_i) \geq \alpha_s g(d_i)$, the total capacity of the removed edges exceeds that of the added edges. Consequently, the total capacity of edges in $\mathcal{\hat{C}}'$ is smaller than $\mathcal{\hat{C}^\star}$. However, this conclusion contradicts our assumption that $\mathcal{\hat{C}^\star}$ is the minimum edge cut. Therefore, $\mathcal{\hat{C}^\star}$ must be a valid cut.
\end{proof}

In the light of Theorem \ref{th:divide}, we can assure that executing a min-cut algorithm on the latency graph $\mathcal{\hat{G}}$ derived from a given DNN $\mathcal{G}$ will yield a valid and optimal partition strategy that minimizes the inference latency $T$ if $g(d_i)$ is provided.

\subsection{ DNN Surgery in the Presence of Competing Devices}\label{sec:problem-definition}

In a practical MEC scenario, multiple non-cooperative mobile devices are linked to the same MEC server, which possesses finite computing resources, denoted by $S$. Each mobile device $d_i$ has an optimal demand $g^{\star}(d_i)$ of computing resources that minimizes its individual cost function $\mathcal{L}(d_i, \cdot)$. However, given the finite computing capacity $S$ of the MEC, the demands $g^{\star}i(d_i)$ may not be satisfiable for all devices when $S < \sum_{d_i \in \mathcal{D}} g^{\star}_i(d_i)$. Economically, the MEC server is inclined to increase the service price when demand surpasses supply, primarily for profit maximization (for instance, the on-demand instance provided by AWS).

Suppose the base unit price of computing resources provided by the MEC server is $1$, and each mobile device $d_i$ has a budget $a_i$ for renting computing resources. If the sum of $a_i$ from all devices surpasses the limit $S$, the current MEC server's corresponding business entity will proportionally elevate the price to $\frac{\sum_{a_i \in \mathcal{A}}a_i}{S}$. Here, $\mathcal{A} = \left\{a_1, \cdots, a_N\right\}$ constitutes the set of all the budgets from mobile devices. Consequently, the actual allocated computing resource $g(a_i, \mathcal{A})$ for device $d_i$ from MEC server can be expressed as follows:
\begin{equation} \label{eq:res-proportion}
        {g}(a_i, \mathcal{A}) := \left\{\begin{array}{cl}
                \frac{ a_i }{ \sum_{a_j \in \mathcal{A}} a_j }S & \sum_{a_j \in \mathcal{A}} a_j > S \\
                a_i                                             & otherwise .
        \end{array}\right.
\end{equation}
To simplify the following description, we denote $g(a_i, \mathcal{A})$ as $g_i$. The cost function Eq. \eqref{eq:cost-fucntion-naive} under the dynamic pricing scenario can be rewritten as:
\begin{equation} \label{eq:cost-function-game}
        \mathcal{L}(d_i, a_i, \mathcal{A}) := T^\star(d_i, g_i) + \gamma a_i .
\end{equation}
The charge function $\beta(\cdot)$ in Eq. \eqref{eq:cost-fucntion-naive} as a linear function in this context.

Every mobile device is self-interested, aiming to minimize their cost function $\mathcal{L}(d_i, a_i, \mathcal{A})$ without concerning the performance of others. However, each self-centered mobile device has to find a compromise with others when $S < \sum_{d_i \in \mathcal{D}} g^{\star}_i(d_i)$, since the selection of $a_i$ influences each other through $\mathcal{A}$.

For instance, the unit price $\frac{\sum_{a_i \in \mathcal{A}}a_i}{S}$ will escalate if all devices opt to raise their budgets to acquire more computing resources, consequently augmenting the cost for every mobile device. In such a case, the problem evolves into a non-cooperative game involving $N$ players.
\begin{definition}[Resource Allocation Game $\Gamma$] \label{def:game}
        $N$ players allocate resources from a limited source $S$. All players choose their budgets $a_i$, and the final allocated resources $g_i$ for them follow Eq. \eqref{eq:res-proportion}. Each player's objective is to minimize its cost function $\mathcal{L}(d_i, a_i, \mathcal{A})$.
\end{definition}

Different mobile devices have varying $a_i$. Devices with lightweight DNNs tend to utilize fewer computing resources to diminish service charges, while others may allocate more resources to expedite inference tasks. A crucial problem in game $\Gamma$ is the existence of a Nash Equilibrium that accommodates the demands of heterogeneous mobile devices:
\begin{definition}[Nash Equilibrium]
        A strategy vector $\mathcal{A}^\star = \left\{ a^{\star}_{1} , \cdots , a^{\star}_{N} \right\}$ is considered a Nash Equilibrium (NE) of the defined game $\Gamma$ if $a^{\star}_{i}$ is the optimal solution of the respective optimization problem of \eqref{eq:cost-function-game}, with $a_j$ fixed to $a_j^\star$ for all $ a_j \neq a_i $.
\end{definition}

\section{Analysis}

In this section, we establish the existence of a unique NE for the game $\Gamma$ as defined in Definition \ref{def:game}.

\subsection{Problem Overview}

In practice, a large number $N$ of heterogeneous mobile devices connect to the same MEC server to accelerate their inference tasks. These devices exhibit significant disparities in terms of computing capacities $E(d_i)$, network uplink rates $b(d_i)$, and inference tasks $\mathcal{G}_i$. As a result, a direct analysis of the strategy vector $\mathcal{A}$ becomes challenging due to \emph{the curse of dimensionality}.

To mitigate this, we introduce a term $A := \frac{\sum_{a_j \in \mathcal{A}} a_j}{S}$, representing the current unit price of MEC service when $A > 1$. The resource allocation $g_i$ for mobile device $d_i$ defined in Eq. \eqref{eq:res-proportion} can be reformulated as:
\begin{equation} \label{eq:res-proportion-mfg}
        {g}(a_i ;A) := \frac{a_i}{\max{(A, 1)}}
\end{equation}
where $\max(A, 1)$ signifies the current MEC service unit price. Each mobile device's optimization problem is thus to minimize its cost $\mathcal{L}(d_i, {a}_i; A)$. The introduction of $A$ brings two significant advantages: (1) \emph{The existence of NE $\mathcal{A}^\star$ is equivalent to the existence of the fixed point $A$, as $a^\star_i$ can be derived from Eq. \eqref{eq:res-proportion} and Eq. \eqref{eq:cost-function-game} once $A$ is known}, simplifying the mathematical analysis. (2) \emph{Compared to obtaining the strategy vector $\mathcal{A}$, acquiring the value of $A$ from the server is more straightforward for a mobile device}, hence making the practical implementation lightweight.

\subsection{The Property of the Cost Function}

We deploy a simple DNN with a structure identical to the one in Fig. \ref{fg:mincut-describe}, containing four neural layers, to a mobile device connected to a MEC server. Fig. \ref{fg:problem-curve} illustrates the cost function $\mathcal{L}(d_i, a_i; A)$ with an increasing $\frac{a_i}{S}$. The blue line represents the inference latency cost $T^\star(d_i, g_i)$, the orange line depicts the service charge $\beta (a_i)$, and the pink line indicates the overall cost $\mathcal{L}(d_i, {a}_i; A)$. The blue and pink lines can be divided into three segments, as shown in Fig. \ref{fg:problem-curve}.

Each segment symbolizes a different partition strategy $\mathcal{C}$. Strategy A (St. A) indicates that the entire DNN is executed locally, and $\mathcal{C} = \left\{ (v, s) | \forall v \in \mathcal{G} \right\} $. Since different $a_i$ values only affect the value of $t_c$ (the blue lines in Fig. \ref{fg:mincut-describe}), they do not impact the mincut value when devices perform local inference. As $a_i$ increases, the inference latency of layers on the server decreases. Segment St. B illustrates the cost when layers $v_1$ and $v_2$ are executed locally and the remaining layers are executed on the server. The corresponding partition strategy $\mathcal{C}$ is $\{ (v_1, v_1') \}$. As the allocated computing resource increases, the inference latency continues to decrease. The overall cost also declines, albeit at a slower rate, due to the increasing value of the charge function. Segment St. C represents all layers being assigned to the server, and $\mathcal{C} = \{(l, v) | \forall v \in \mathcal{G} \}$.

\begin{figure}
        \centering
        \subfloat[]{ \label{fg:problem-curve-parts}
                \includegraphics[width=0.8\linewidth]{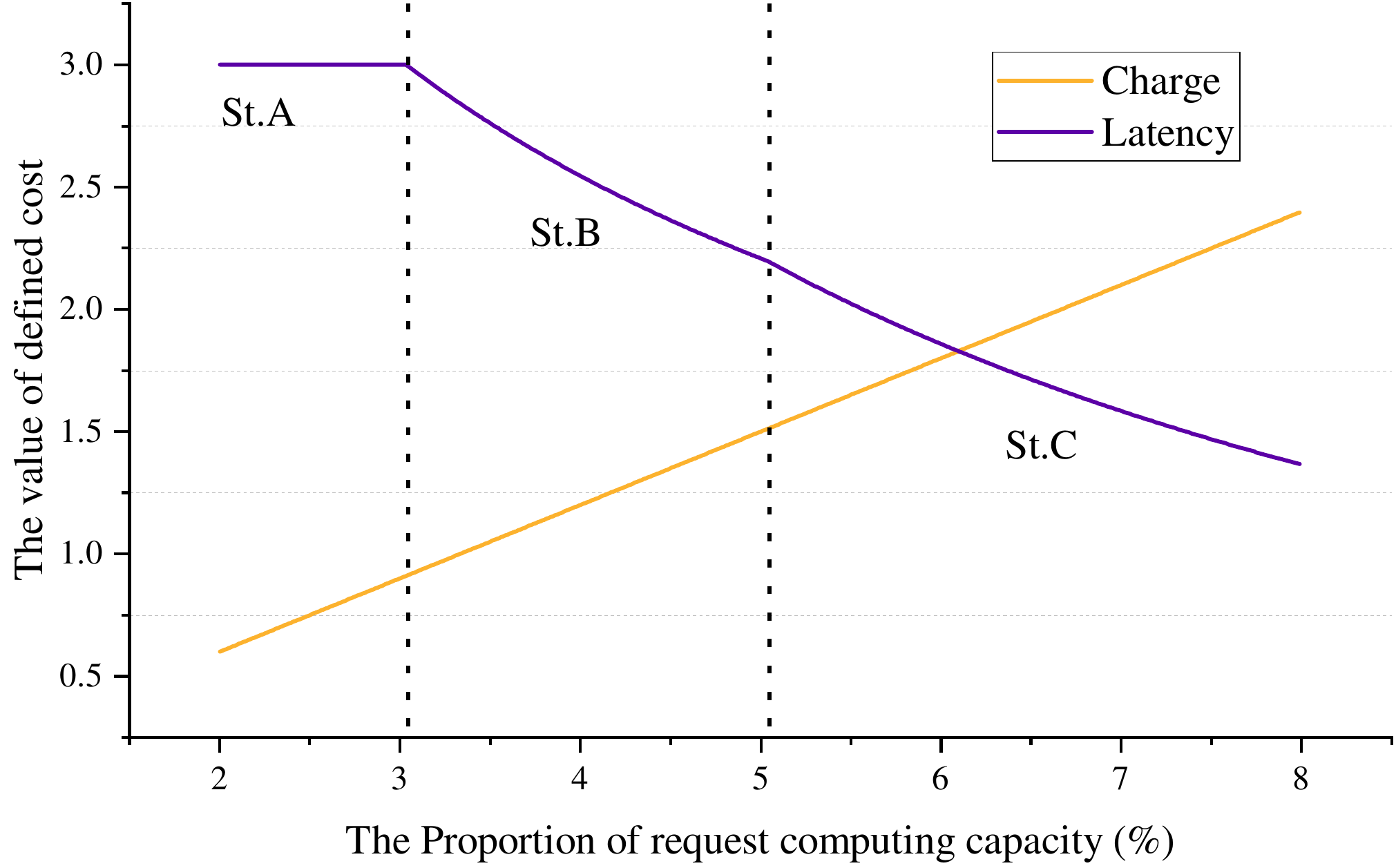}
        }

        \subfloat[]{ \label{fg:problem-curve-sum}
                \includegraphics[width=0.8\linewidth]{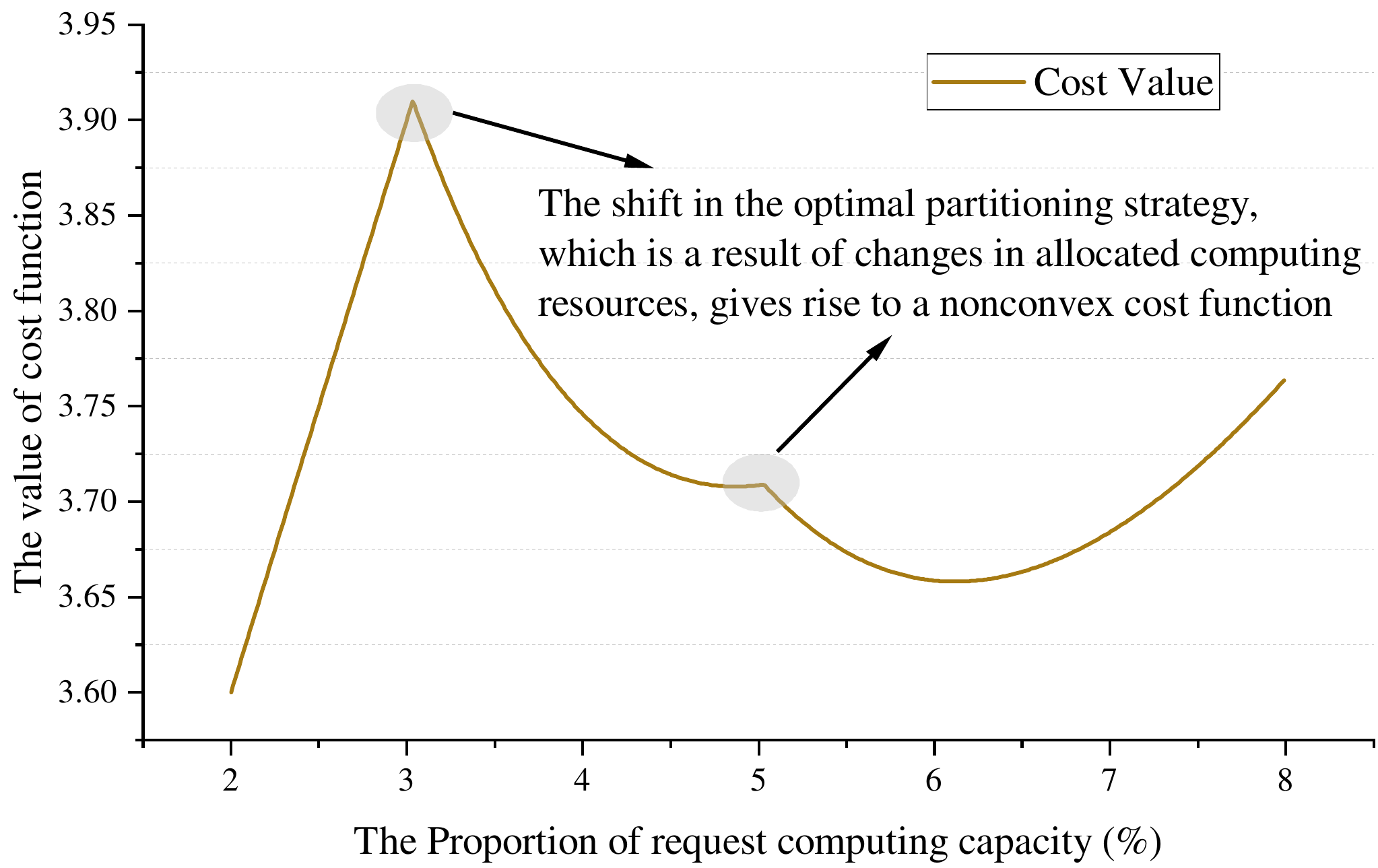}
        }
        \caption{The properties of the cost function $\mathcal{L}(d_i, a_i, \mathcal{A})$ with varying parameter $a_i$. The x-axis displays the value of $\frac{a_i}{S}$. (a) The changing of service charge $\gamma a_i$ and inference latency $T^\star(d_i, g_i)$ with different parameter $a_i$. (b) The changing of the cost function value $\mathcal{L}(d_i, a_i, \mathcal{A})$ with varying parameter $a_i$.}
        \label{fg:problem-curve}
\end{figure}

Here we focus on giving an accurate expression of cost function $\mathcal{L}$. From the above observation, $\mathcal{L}(d_i, a_i; A)$ is non-convex and non-smooth at these points that the optimal partition strategies switch. Since the cost function $\mathcal{L}(d_i, a_i; A)$ is composed by the inference latency function $T^\star(d_i, a_i)$ and the charge function $\beta(a_i)$. In Eq. \eqref{eq:cost-function-game}, $\beta(a_i)$ is set to be a linear function.  Therefore, we mainly discuss $T^\star(d_i, a_i)$ here, which returns the minimum edge cut value of a latency graph $\mathcal{\widehat{G}}$ defined in Section \ref{sec:graph-mincut}.

For any graph $\mathcal{\hat{G}} := <\mathcal{\hat{V}}, \mathcal{\hat{E}}, \phi>$, the minimum edge cut problem can be formulated with the following integer programming \cite{trevisan_2011}:
\begin{equation}
        \begin{array}{cl}
                \multicolumn{2}{c}{\min \sum_{e \in \mathcal{\hat{E}}} \phi(e) z_{e}}            \\
                \multicolumn{2}{c}{\text { s.t. }}
                \\
                w_{v}+ z_e = 1      & \forall e:(s, v) \in \mathcal{\hat{E}}                     \\
                w_v - w_u + z_e = 0 & \forall e:(u, v) \in \mathcal{\hat{E}}, u \neq l, v \neq s \\
                -w_u + z_e  = 0     & \forall e:(u, t) \in \mathcal{\hat{E}}
        \end{array}
\end{equation}
where $\mathbf{w}: \{w_v\} \in R^{|\mathcal{\hat{V}}| - 2}$ and $\mathbf{z}: \{ z_e \} \in R^{|\mathcal{\hat{E}}|}$ are two vectors that every element $z_{e}, w_v \in \left\{ 0, 1 \right\}$, and there is no $w_l$ or $w_s$ for source or sink vertex in $\mathbf{w}$. If the edge $e$ in the edge cut $\mathcal{\widehat{C}}$, we have $z_e = 1$. Hence, every edge cut $\mathcal{\widehat{C}}$ corresponds to a certain $\mathbf{z}$. Similarly, $w_v = 1$ indicates that vertex $v$ is in the same set with the sink $s$, otherwise, it is with the source $l$.

Let $\mathcal{\hat{E}}' := \{(l, v) | \forall v \in \mathcal{\hat{V}} \}$ denote all the edges that represent the inference latency if layer $v$ is deployed on the MEC server. The cost function $\mathcal{L}(d_i, a_i; A)$ can be expressed as:
\begin{equation} \label{eq:cost-function-lp}
        \begin{array}{cl}
                \multicolumn{2}{c}{\min \left\{
                        \sum_{e_i \in \mathcal{\hat{E}} - \mathcal{\hat{E}}'} \phi(e_i) z_{e_i}
                        + \sum_{e_j:(l, v) \in \mathcal{\hat{E}}'} \frac{f(v)}{g(a_i; A)} z_{e_j}
                \right\} + \gamma a_i }                                                          \\
                \multicolumn{2}{c}{\text { s.t. }}
                \\
                w_{v}+ z_e = 1      & \forall e:(s, v) \in \mathcal{\hat{E}}                     \\
                w_v - w_u + z_e = 0 & \forall e:(u, v) \in \mathcal{\hat{E}}, u \neq l, v \neq s \\
                -w_u + z_e  = 0     & \forall e:(u, t) \in \mathcal{\hat{E}} \text{.}
        \end{array}
\end{equation}
We will not attempt to directly solve the equation, as integer programming is NP-complete. Eq. \eqref{eq:cost-function-lp} is only used for mathematical analysis.

\begin{lemma} \label{th:A->a}
        The optimal $a_i^\star$, which minimizes the cost function $\mathcal{L}(d_i, \cdot; A)$ with a fixed $A$, is given by $a_i^\star = \sqrt{\frac{c_i^\star \max(A, 1)}{\gamma }}$, where $c_i^\star = \sum_{e:(l, v) \in \mathcal{\hat{E}}' } f(v) z_{e}^\star $ and $z_{e}^\star \in \{0, 1\}$ denotes whether $e$ is included in the minimum edge cut.
\end{lemma}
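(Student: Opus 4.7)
My plan is to turn the joint problem of choosing $a_i$ and a partition into a two-stage optimization by conditioning on the partition first. For any fixed feasible pair $(\mathbf{z}, \mathbf{w})$ satisfying the constraints of \eqref{eq:cost-function-lp}, the objective in \eqref{eq:cost-function-lp} is an \emph{explicit} function of $a_i$ only, because the dependence on $a_i$ enters solely through the factor $1/g(a_i;A) = \max(A,1)/a_i$ multiplying the server-side terms. Defining
\begin{equation*}
K_P := \sum_{e \in \mathcal{\hat{E}} - \mathcal{\hat{E}}'} \phi(e)\, z_e, \qquad c_P := \sum_{e:(l,v) \in \mathcal{\hat{E}}'} f(v)\, z_e,
\end{equation*}
the cost under a fixed partition $P = (\mathbf{z}, \mathbf{w})$ becomes $\mathcal{L}_P(a_i) = K_P + c_P \max(A,1)/a_i + \gamma a_i$, which is strictly convex in $a_i > 0$ (unless $c_P = 0$, the degenerate "run everything locally" case which I would handle separately).

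Next, I would interchange the two minimizations: since the partition set is finite,
\begin{equation*}
\min_{a_i > 0}\ \min_{P}\ \mathcal{L}_P(a_i) \;=\; \min_{P}\ \min_{a_i > 0}\ \mathcal{L}_P(a_i).
\end{equation*}
For each $P$ with $c_P > 0$, a first-order condition on $\mathcal{L}_P$ gives the unique minimizer $a_i^{P} = \sqrt{c_P \max(A,1)/\gamma}$, attaining the value $K_P + 2\sqrt{c_P \gamma \max(A,1)}$. Thus the joint optimum is $(P^\star, a_i^\star)$ where $P^\star$ minimizes $K_P + 2\sqrt{c_P \gamma \max(A,1)}$ over feasible $P$ and $a_i^\star = \sqrt{c_{P^\star} \max(A,1)/\gamma}$. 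This already matches the claimed formula provided I can identify $c_{P^\star}$ with the $c_i^\star$ defined in the statement, that is, with the min-cut partition when $a_i$ is fixed to $a_i^\star$.

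To close this last gap, I would verify that $P^\star$ is in fact the minimum edge cut of the latency graph at $a_i^\star$. Suppose $P'$ is a min-cut at $a_i^\star$ (by Theorem \ref{th:divide} it is a valid partition, so it is feasible in \eqref{eq:cost-function-lp}). Then the chain
\begin{equation*}
\mathcal{L}_{P'}(a_i^\star) \;\le\; \mathcal{L}_{P^\star}(a_i^\star) \;=\; \min_{a_i} \mathcal{L}_{P^\star}(a_i) \;\le\; \min_{a_i} \mathcal{L}_{P'}(a_i) \;\le\; \mathcal{L}_{P'}(a_i^\star)
\end{equation*}
forces equality throughout, hence $a_i^\star$ also minimizes $\mathcal{L}_{P'}$, which by strict convexity forces $c_{P'} = c_{P^\star}$. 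So $c_i^\star$ is well-defined and equals $c_{P^\star}$, giving $a_i^\star = \sqrt{c_i^\star \max(A,1)/\gamma}$.

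\textbf{Expected main obstacle.} The subtle point is the mutual coupling between the continuous variable $a_i$ and the combinatorial partition $P$: $P^\star$ depends on $a_i^\star$ and vice versa, and the pointwise minimum of the convex pieces $\mathcal{L}_P$ is neither convex nor smooth (as Fig.~\ref{fg:problem-curve} illustrates). The min-interchange step dissolves the apparent circularity, but the subsequent identification of $P^\star$ with the min-cut at $a_i^\star$ is the delicate step, since a priori the envelope minimum could be attained at a non-differentiable kink where several partitions tie. The tie-breaking argument above (using strict convexity of each $\mathcal{L}_P$) is what I expect to require the most care.
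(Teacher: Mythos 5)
Your proposal is correct, and it reaches the paper's formula through the same core computation --- the first-order condition on the fixed-partition cost $K_P + c_P\max(A,1)/a_i + \gamma a_i$ --- but it handles the central difficulty (the coupling between the combinatorial partition and the continuous budget $a_i$) by a genuinely different route. The paper works directly on the nonsmooth envelope $\mathcal{L}(d_i,\cdot;A)$: it argues that a piecewise (kink) point cannot be a minimizer by comparing left and right derivatives and noting that if the optimal partition did not switch there, the latency would exceed the min-cut value; it then excludes the endpoints $a_i=0$ and $a_i=S$ by the standing assumptions of the game, and only afterwards differentiates to obtain the formula. Your proof never differentiates the envelope at all: the finite decomposition into strictly convex pieces $\mathcal{L}_P$, the min-interchange $\min_{a_i}\min_P = \min_P\min_{a_i}$, and the sandwich chain together replace the paper's kink-exclusion argument. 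What your route buys is rigor at exactly the spot where the paper is weakest: the paper's kink argument is a sketch (its inequality $T^{\star\prime}(a_k^-) < \beta'(a_k) < 0$ is garbled as written, since $\beta'$ is positive), whereas your sandwich argument cleanly disposes of ties --- if several cuts achieve the min-cut at $a_i^\star$, they are forced by strict convexity to share the same coefficient $c$, so $c_i^\star$ and hence the formula are unambiguous even at would-be kinks. What the paper's route buys is an explicit treatment of the boundary case you defer: the all-local partition with $c_P=0$, whose infimum over $a_i>0$ is not attained, is excluded in the paper by the assumption that $d_i$ actually participates in the game (this is also the situation the algorithm later handles via Resource Sniff), and your "handled separately" remark needs to invoke that same assumption to close. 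With that one caveat made explicit, your proof is complete and, if anything, tighter than the paper's.
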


\begin{proof}
        Since $\mathcal{L}(d_i, a_i; A)$ is a piecewise function and is undifferentiable at the piecewise points, we first demonstrate that the optimal $a_i^\star$ that minimizes $\mathcal{L}$ cannot be at these differentiable points. If $a_k$ is a piecewise point that minimizes $\mathcal{L}$, then the left derivative $\mathcal{L}'({a}^{-}_k) < 0$ and right derivative $\mathcal{L}'({a}^{+}_k) > 0$. Note that the piecewise point also implies a switch in the optimal partition strategy $\mathcal{C}^\star$. However, the first order derivative of $\gamma \beta'(\cdot) = \gamma > 0$ holds for every $a_i$, hence $T^{\star'}(a^{-}_k) < \beta'(a_k) < 0$. If the partition strategy $\mathcal{C}$ does not switch at $a_k$, there must exist a small $\Delta a$ such that $T^\star(g(a_k + \Delta a; A)) > \widetilde{T}^\star(g(a_k + \Delta a; A))$, where $\widetilde{T}$ denotes the inference latency if the strategy $\mathcal{C}^\star$ does not switch. Since $T^\star$ should return the minimum inference latency, $a_i^\star$ cannot be the piecewise point. In addition, $a_i = 0$ implies that device $d_i$ does not participate in the game $\Gamma$ and makes no contribution to $A$, and $a_i = S$ implies there is only one player in the game. Both cases contradict our assumptions, thus the location of optimal $a_i^\star$ must be differentiable.

        For the optimal $a_i^\star$, the cost function $\mathcal{L}(d_i, {a}i^\star; A)$ can be represented as $\frac{c^\star_i}{g(a_i^\star; A)} + k + \gamma a_i^\star$, where $k = \sum_{e_i \in \mathcal{\hat{E}} - \mathcal{\hat{E}}'} \phi(e_i) z_{e_i}^\star$ is a constant. By setting $\mathcal{L}'(d_i, {a}_i^\star; A) = 0$, we derive $a_i^\star = \sqrt{\frac{c_i^\star \max(A, 1)}{\gamma }}$.
\end{proof}

\subsection{Existence of a Unique Nash Equilibrium}

The existence of a unique Nash Equilibrium (NE) in game $\Gamma$ can be established by constructing a contraction mapping on $A$ and applying the Banach fixed-point theorem \cite{ciesielski2007stefan}. We use a cumulative distribution function $F: [0, 1] \to [0, 1]$ to describe the distribution of $a_i$. If all $d_i$ are arranged in a specific order, such as sorted by descending local computing resources, $F$ can be defined as $F(s; A) := \frac{\sum_{s}a(d_i, A)}{AS}$, where the indices of these devices $d_i$ are less than $sN$.

The interpretation of $F(s; A)$ is that a $s$ percentage of mobile devices contributes a $F(s; A)$ percentage income to the MEC server when the current unit price is $A$. With the cumulative distribution function $F$, we can deduce that:
\begin{lemma} \label{th:contraction}
        If $\gamma > \frac{c^\star}{4S^2}$, then $F(s; \cdot)$ is a contraction mapping with respect to $A$, where $c^\star := \max{c^\star_i}$.
\end{lemma}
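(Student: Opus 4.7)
The natural approach is to show that $F(s;\cdot)$ is globally Lipschitz in $A$ with constant $L<1$, and then conclude contraction by the mean value theorem. The argument has three short moves: close-form substitution via Lemma \ref{th:A->a}, a derivative bound, and an application of the hypothesis.

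First, I would use Lemma \ref{th:A->a} to eliminate the dependence of $F$ on the full strategy vector. Since the resource game is only interesting when total demand exceeds supply, the relevant domain is $A\geq 1$, where $\max(A,1)=A$ and the best response becomes $a_i^\star(A)=\sqrt{c_i^\star A/\gamma}$. Substituting this into the definition of $F$ collapses everything but the $A$-dependence:
\begin{equation*}
F(s;A) \;=\; \frac{\sum_{i<sN}\sqrt{c_i^\star A/\gamma}}{AS} \;=\; \frac{K(s)}{S\sqrt{\gamma}}\,A^{-1/2},
\qquad K(s):=\sum_{i<sN}\sqrt{c_i^\star}.
\end{equation*}
So $F(s;\cdot)$ is smooth and monotone decreasing on $[1,\infty)$, and the game-theoretic content is packaged entirely into the coefficient $K(s)/(S\sqrt{\gamma})$.

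Second, I would differentiate and bound. A direct computation gives $\partial_A F(s;A)=-\tfrac{1}{2}\,K(s)(S\sqrt{\gamma})^{-1}A^{-3/2}$, and the magnitude is maximized at the boundary $A=1$ of the relevant domain, yielding the uniform bound
\begin{equation*}
\sup_{A\geq 1}|\partial_A F(s;A)| \;\leq\; \frac{K(s)}{2S\sqrt{\gamma}}.
\end{equation*}
Using $c_i^\star\leq c^\star$ together with the cumulative/normalized reading of $\{i<sN\}$ that is built into the definition of the CDF $F$ (so that $K(s)$ contributes at most $\sqrt{c^\star}$ as $s$ ranges over $[0,1]$), the derivative bound becomes $|\partial_A F(s;A)|\leq \sqrt{c^\star}/(2S\sqrt{\gamma})$.

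Third, I would invoke the hypothesis. The condition $\gamma>c^\star/(4S^2)$ is algebraically equivalent to $\sqrt{c^\star}/(2S\sqrt{\gamma})<1$, so setting $L:=\sqrt{c^\star}/(2S\sqrt{\gamma})\in(0,1)$ and applying the mean value theorem gives $|F(s;A_1)-F(s;A_2)|\leq L\,|A_1-A_2|$ for all $A_1,A_2\geq 1$, which is the desired contraction. The case $A<1$ is inert because the service price floors at $1$, so either $F$ is constant in $A$ there or this regime is excluded from the game's equilibrium analysis.

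\textbf{Anticipated obstacle.} The clean step is the differentiation; the delicate step is the bound on $K(s)$. The naive pointwise estimate $K(s)\leq sN\sqrt{c^\star}$ would only yield contraction under the much stronger hypothesis $\gamma>N^2 c^\star/(4S^2)$. Matching the stated condition requires carefully leveraging the normalization implicit in the CDF formulation (so that the population contribution is measured against $AS$, which itself grows with the number of active players), and treating $F$ in the mean-field sense where the index is rescaled to $[0,1]$. I would also need to address the mild non-smoothness at $A=1$ by verifying that the one-sided Lipschitz bound at the boundary matches the interior one, so that contraction holds uniformly on the full domain used for the Banach fixed-point argument of the following subsection.
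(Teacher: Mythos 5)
Your overall route is the same as the paper's: substitute the closed-form best response $a_i^\star(A)=\sqrt{c_i^\star\max(A,1)/\gamma}$ from Lemma \ref{th:A->a} into $F$, differentiate with respect to $A$, bound the derivative, and observe that $\gamma>c^\star/(4S^2)$ is exactly the condition making that bound smaller than $1$. Your first three moves are executed correctly, and in one respect more carefully than the paper: you target the two-sided bound $|F(s;A_1)-F(s;A_2)|\le L|A_1-A_2|$ via the mean value theorem, whereas the paper's displayed inequality $F(s;A+h)-F(s;A)\le kh$ is only one-sided and holds trivially, since $F(s;\cdot)\propto A^{-1/2}$ is decreasing in $A$; a contraction needs the absolute-value version you state.

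However, the obstacle you flag at the end is a genuine gap, not a technicality you can absorb into ``the normalization implicit in the CDF.'' With $K(s):=\sum_{i<sN}\sqrt{c_i^\star}$, nothing in the definition of $F$ caps $K(s)$ at $\sqrt{c^\star}$: the denominator $AS$ is fixed by the input price $A$ and does not grow with the number of summands in the numerator. Concretely, take $N$ devices all with $c_i^\star=c^\star$; your own formula gives $F(1;A)=\frac{N}{S}\sqrt{c^\star/\gamma}\,A^{-1/2}$, whose Lipschitz constant on $[1,\infty)$ is $N\sqrt{c^\star}/(2S\sqrt{\gamma})$. Under the stated hypothesis this is only bounded by $N$, and contraction would require the stronger condition $\gamma>N^2c^\star/(4S^2)$. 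You should also know that the paper's own proof contains no idea that closes this: it bounds only the per-device derivative, $\mathrm{d}a(d_i,A)/\mathrm{d}A\le\sqrt{c_i^\star/(4A\gamma)}<\sqrt{c^\star/(4\gamma)}$, and then asserts the conclusion for the sum defining $F$, silently eliding exactly the factor of $sN$ you identified. So your proposal reproduces the paper's argument and, by naming the $K(s)$ bound as the delicate step, isolates precisely where that argument is deficient; repairing it requires either the $N$-dependent hypothesis above, or reinterpreting the quantities (e.g., $S$ as per-device capacity, or $F$ as a per-device average) so that the sum over devices is genuinely normalized away.
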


\begin{proof}
        From the characterization of $a_i$ revealed by Lemma \ref{th:A->a}, we can consider $a_i$ as a function of $a(d_i, A)$. We begin by differentiating $a(d_i, A)$ with respect to $A$, and the first-order condition shows that $ \frac{\mathrm{d} a(d_i, A)}{\mathrm{d} A} \leq \sqrt{\frac{c^\star_i}{4 A \gamma}} < \sqrt{\frac{c^\star_i}{4 \gamma}} < \sqrt{\frac{c^\star}{4 \gamma}}$, where $c^\star$ is a constant representing the maximum of all $c^\star_i$. If we set $\gamma > \frac{c^\star }{4 S^2}$, we obtain $ \frac{\mathrm{d} a(d_i, A)}{\mathrm{d} AS} < \sqrt{\frac{c^\star}{4\gamma S^2}} < 1$ for every $d_i$. Hence, for every $A, s, h > 0$ and $0 < k = \sqrt{\frac{c^\star}{4\gamma S^2}} < 1$, we have:
        \[
                \begin{gathered}
                        F(s ; A\!+\!h) \!-\! F(s; A)\! =\! \frac{\sum_{s}{a(d_{i}, A\!+\!h)}}{(A\!+\!h)S} - \! \frac{\sum_{s}{a(d_{i}, A)}}{AS}\leq kh\text{.}
                \end{gathered}
        \]
        Thus, $F(s, \cdot)$ is a contraction mapping.
\end{proof}

Now, using Lemma \ref{th:contraction} and the Banach fixed-point theorem \cite{ciesielski2007stefan}, the existence of a unique NE can be deduced:
\begin{theorem}[Equilibrium of $A$] \label{th:existence}
        There exists a unique Nash Equilibrium $A^\star$ for game $\Gamma$.
\end{theorem}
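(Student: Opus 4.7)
My plan is to reduce the existence and uniqueness of an NE for $\Gamma$ to a one-dimensional fixed-point problem in the scalar $A$, and then invoke the Banach fixed-point theorem via Lemma~\ref{th:contraction}.

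First I would observe that Lemma~\ref{th:A->a} already expresses each device's best response $a_i^\star(A) = \sqrt{c_i^\star \max(A, 1)/\gamma}$ as an explicit function of $A$ alone. Hence a strategy vector $\mathcal{A}^\star$ is an NE exactly when the induced aggregate price is consistent, i.e.,
\[
A^\star = \frac{1}{S}\sum_{d_i \in \mathcal{D}} a_i^\star(A^\star),
\]
which in the paper's notation is the single scalar equation $F(1; A^\star) = 1$. Defining $\mathcal{T}(A) := A \cdot F(1; A) = \sum_i a_i^\star(A)/S$, finding an NE is thus equivalent to finding a fixed point of $\mathcal{T}$. The case $A \leq 1$ corresponds to the trivial regime in which every request is satisfiable at the base price, so I would restrict $\mathcal{T}$ to a closed interval $[1, A_{\max}]$, choosing $A_{\max}$ large enough that $\mathcal{T}$ maps the interval into itself; this self-invariance is routine from the sublinear $\sqrt{A}$-growth of each $a_i^\star(A)$.

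Next I would apply Lemma~\ref{th:contraction}: under the hypothesis $\gamma > c^\star/(4S^2)$ it guarantees that $F(s; \cdot)$ is a contraction in $A$, and the uniform derivative bound used inside its proof transfers directly to $\mathcal{T}$, giving a Lipschitz constant strictly less than one on $[1, A_{\max}]$. Since $[1, A_{\max}]$ equipped with the Euclidean metric is a complete metric space and $\mathcal{T}$ is a contractive self-map on it, the Banach fixed-point theorem~\cite{ciesielski2007stefan} yields a unique fixed point $A^\star$. Substituting this $A^\star$ back into Lemma~\ref{th:A->a} recovers the unique best-response profile $\mathcal{A}^\star = \{a_i^\star(A^\star)\}_{i=1}^N$, which is then the unique NE of $\Gamma$.

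The main obstacle is the bookkeeping between the $F$-level contraction in Lemma~\ref{th:contraction} and the $\mathcal{T}$-level fixed-point iteration required by Banach: Lemma~\ref{th:contraction} is stated for the normalized CDF $F(s; A)$, whereas what Banach needs is contractivity of the unnormalized self-map $\mathcal{T}(A)$ on an invariant interval. I would handle this by re-examining the bound on $\mathrm{d}a_i^\star/\mathrm{d}A$ that appears inside the proof of Lemma~\ref{th:contraction} and using it both to establish the contraction constant of $\mathcal{T}$ and to verify self-invariance of $[1, A_{\max}]$; once these two checks are done, existence and uniqueness follow immediately from Banach.
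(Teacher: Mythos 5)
Your overall route is the paper's own: the paper ``proves'' Theorem~\ref{th:existence} in a single sentence, by combining the contraction property of Lemma~\ref{th:contraction} with the Banach fixed-point theorem and then recovering $\mathcal{A}^\star$ from $A^\star$ via the best responses of Lemma~\ref{th:A->a} --- exactly your skeleton. You are in fact more careful than the paper: you correctly observe that the equilibrium condition is $F(1;A^\star)=1$, i.e.\ a fixed point of the unnormalized map $\mathcal{T}(A)=\frac{1}{S}\sum_i a_i^\star(A)$, whereas applying Banach literally to $A \mapsto F(1;A)$ (the object Lemma~\ref{th:contraction} actually speaks about) would produce the solution of $F(1;A)=A$, which is not the equilibrium. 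Flagging this $F$-versus-$\mathcal{T}$ mismatch is a genuine improvement over the paper's exposition.

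The gap is in your repair of that mismatch. The bound inside the proof of Lemma~\ref{th:contraction} is \emph{per device}: $\mathrm{d}a_i^\star/\mathrm{d}A \le \sqrt{c_i^\star/(4\gamma A)} < \sqrt{c^\star/(4\gamma)} < S$ under $\gamma > c^\star/(4S^2)$. It does not ``transfer directly'' to $\mathcal{T}$: summing over the $N$ devices yields only $\mathcal{T}'(A) < N$, not $\mathcal{T}'(A) < 1$. Concretely, for $A \ge 1$ one has $a_i^\star(A)=\sqrt{c_i^\star A/\gamma}$, hence $\mathcal{T}'(A)=\frac{1}{2S}\sum_i\sqrt{c_i^\star/(\gamma A)}$, and a Lipschitz constant below one on $[1,A_{\max}]$ requires $\gamma > \bigl(\sum_i \sqrt{c_i^\star}\bigr)^2/(4S^2)$ --- on the order of $N^2$ times stronger than the hypothesis of Lemma~\ref{th:contraction}. (The paper's own proof of Lemma~\ref{th:contraction} commits the same per-device-to-aggregate slip, so your proposal reproduces the paper's flaw rather than fixing it; but since you explicitly rest the contractivity of $\mathcal{T}$ on that derivative bound, your chain of reasoning breaks at this step.) Two honest ways to close it: strengthen the assumption on $\gamma$ as above, or dispense with Banach on each locked-cut regime by noting that there $\mathcal{T}(A)=C\sqrt{A}$ with $C=\frac{1}{S}\sum_i\sqrt{c_i^\star/\gamma}$, so $\mathcal{T}(A)=A$ has the unique solution $A^\star=C^2$ by inspection --- though one must then still argue across the piecewise switches of $c_i^\star(A)$, a dependence that both you and the paper suppress.
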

Theorem \ref{th:existence} indicates that mobile devices can agree on a unit service price $A^\star$. Given the known cost function \eqref{eq:cost-function-lp}, $a_i^\star = \arg\min(\mathcal{L}(d_i, \cdot; A^\star))$ for each mobile device. Therefore, we can obtain the corresponding strategy vector $\mathcal{A}^\star$ for game $\Gamma$.

\section{Decentralized DNN Surgery}

In this section, we present our proposed \textit{Decentralized DNN Surgery} framework which adaptively adjusts the demand of each mobile device and chooses the best partition strategy in a decentralized manner.

\subsection{Analysis of Convergence Procedures}

While Theorem \ref{th:existence} demonstrates the existence of a unique Nash Equilibrium (NE) for mobile devices, finding procedures that converge to NE remains challenging. Such procedures have been identified only for specific game classes: potential games, 2-player zero-sum games, among a few others \cite{hu2003nash,hart2013simple}. However, numerous problems, including 2-player zero-sum, are generally \emph{hopelessly impractical to solve} \cite{shoham2008multiagent} due to their PPAD-hard nature \cite{daskalakis2009complexity,mehta2018constant}. We analyze the resource allocation game $\Gamma$ proposed in this paper, demonstrating its PPAD-hard characteristics.

\begin{theorem} \label{th:ppad-hard}
        The game $\Gamma$ proposed in this paper is PPAD-hard.
\end{theorem}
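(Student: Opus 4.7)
The plan is to prove PPAD-hardness by polynomial-time reduction from a known PPAD-complete problem. A natural choice is the generalized circuit (\textsc{GCircuit}) problem, which is standard in PPAD-hardness proofs for equilibrium problems \cite{daskalakis2009complexity,mehta2018constant} and whose gadget-based structure matches the piecewise character of our cost function exposed in Lemma \ref{th:A->a}. The alternative, reducing from 2-player bimatrix Nash (as in \cite{daskalakis2009complexity}), is also viable but requires simulating mixed strategies, whereas here each player already chooses a scalar $a_i$ in a continuum, so a circuit-style reduction is more direct.

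First I would fix the encoding: each wire of the generalized circuit is represented by the normalized budget $a_i/S$ of one dedicated mobile device in game $\Gamma$, and the globally observed price $A$ plays the role of the shared arithmetic substrate that couples gates together. Recall from Lemma \ref{th:A->a} that the best response obeys $a_i^\star(A) = \sqrt{c_i^\star(A)\,\max(A,1)/\gamma}$, where $c_i^\star(A)$ is the sum of $f(v)$ over the server-side layers selected by the min-cut, and is piecewise-constant in $A$ with breakpoints determined by the latency-graph parameters. This means that by choosing the DNN graph $\mathcal{G}_i$, the FLOPs $f(v)$, the feature sizes $o(e)$, the computing capacity $E(d_i)$, and the bandwidth $b(d_i)$, I can sculpt the best-response curve $A \mapsto a_i^\star(A)$ to be an arbitrary prescribed piecewise function composed of segments of the form $\sqrt{c\cdot \max(A,1)/\gamma}$ with arbitrary rational breakpoints and constants $c$.

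Next I would build the gadgets. For each wire value $x\in[0,1]$ of the circuit, I designate one player whose best response $a_i^\star(A)/S$ reproduces $x$ at equilibrium. Comparison and brittle threshold gates follow immediately from placing min-cut breakpoints at the desired thresholds of $A$, since the partition strategy flips sharply there. Addition gates are implemented by attaching multiple small-budget devices whose summed contributions to $A = \sum_j a_j / S$ realize the sum of incoming wires. Multiplication by a rational constant and approximate multiplication of two wire values are realized by composing these primitives with the square-root dependence of $a_i^\star$ on $A$ through several chained devices, absorbing the nonlinearity. Finally, I would instantiate the reduction in the regime $\gamma \le c^\star/(4S^2)$ so that Lemma \ref{th:contraction} does not apply and the contraction-based shortcut to the equilibrium is unavailable; this regime is precisely where PPAD-hardness becomes substantive and is consistent with the paper's algorithmic tractability result, which required the opposite inequality.

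The main obstacle I expect is the multiplication gadget, since $a_i^\star(A)$ depends on $A$ only through $\sqrt{\max(A,1)}$ and a discretely-chosen $c_i^\star$, not through a native product operation on two wire values; the standard trick is to use a logarithmic change of variable and additive gadgets to simulate multiplication, but making this robust to the approximation error tolerated in \textsc{GCircuit} requires a careful accounting of how the breakpoints of the min-cut interact across devices. A secondary technical point is verifying that every Nash equilibrium of the constructed instance, not merely a specific one, decodes to a valid circuit assignment; this is typically handled by padding the instance with ``anchor'' devices whose unique best responses pin down the overall scale of $A$ and rule out spurious equilibria. Once these gadgets are in place, the reduction runs in polynomial time in the circuit size, and the PPAD-hardness of \textsc{GCircuit} transfers to game $\Gamma$.
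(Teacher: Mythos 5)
Your route is genuinely different from the paper's, and it is worth saying how. The paper does not construct a reduction at all: its proof considers the degenerate two-device version of $\Gamma$ and argues informally that it exhibits the defining features of a 2-player general-sum game (payoffs neither aligned nor strictly opposed, nonconvex cost functions), then invokes the known PPAD-hardness of that class \cite{mehta2018constant}. Your plan --- a polynomial-time gadget reduction from the generalized circuit problem --- points in the methodologically stronger direction, since hardness of $\Gamma$ requires embedding instances of a known hard problem \emph{into} $\Gamma$, not merely observing that $\Gamma$ resembles a hard class.

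However, your reduction as sketched hits a structural obstruction that you never address: $\Gamma$ is an aggregative game. Each device's cost $\mathcal{L}(d_i, a_i; A)$ depends on the other players only through the single scalar $A = \sum_j a_j / S$ --- this is precisely the property the paper exploits in Lemma \ref{th:A->a} and in its decentralized algorithm. Consequently, a ``gate'' device cannot read the value of a designated input wire: it observes only the global sum of all budgets, the same quantity every other device observes. Generalized-circuit reductions require selective wiring --- each gate must respond to its own specified inputs --- and your encoding of one wire per device provides no mechanism for routing a particular $a_j$ to a particular gate; your addition gadget, for instance, would sum \emph{all} budgets, not the two incoming wires. The only apparent repair, packing distinct wires into separated numerical scales of $A$ and letting each device's min-cut breakpoints decode its own inputs, is undercut by the $\sqrt{c_i^\star \max(A,1)/\gamma}$ form of the best response, which mixes scales nonlinearly; your proposal offers nothing to control this. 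Beyond that, the two difficulties you explicitly defer --- the multiplication gadget and ruling out spurious equilibria --- are not secondary technicalities but the entire substance of such a reduction. As written, your text is a plausible research program rather than a proof: it neither establishes the theorem nor reproduces the paper's (admittedly informal) two-player resemblance argument.
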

\begin{proof}
        The result can be easily inferred by showing that a degenerate version of game $\Gamma$ resembles a 2-player general-sum game, which is established as PPAD-hard \cite{mehta2018constant}. A key property of a 2-player general-sum game is the uncertainty of the total payoff. In other words, a player's payoff does not necessarily bear a direct or inverse proportion relationship with the other player's payoff. Suppose there are only two mobile devices in $\Gamma$. Given the complexity of the DNN structure, the cost function of game $\Gamma$ is nonconvex. Thus, the cumulative cost function values of these mobile devices are unpredictable. For instance, if one device opts for a significant portion (99\%) of computing resources from the MEC server, both devices stand to lose due to the higher price of the MEC service. Conversely, if one device chooses not to use MEC service (0\%), the payoff of this device could either increase or decrease, depending on the specific DNN and device configuration. However, the other device will invariably benefit from the decrease in service price.
\end{proof}

\subsection{ Iterative Approximation Algorithm}

\begin{algorithm}[tb]
        \caption{Iterative approximation algorithm on device $i$}
        \small
        \label{al:dp-homo}

        \KwIn{An initial $a_i^0$ for the mobile device $d_i$}

        \SetKwFunction{FMain}{Optimization}
        \SetKwFunction{FInference}{Inference}
        \SetKwProg{Fn}{Function}{:}{}

        \Fn{\FMain{}}{
        Generate DAG $\mathcal{G}$ using given DNN\;
        Send $a_i^0$ to the server\;
        Query the server to get the value of $ A^0$\;
        \For{$t \gets [0 \cdots T]$}{
        \If{$a_i^t = 0$}{
        \If{perform locally inference for a while}{
        Query the server to get the value of $ A^t$\;
        Run a grad search for $a_i^{t+1}$\;
        }
        }
        \Else{
        Estimate $g_i^t$ with Eq. \eqref{eq:res-proportion-mfg}\;
        Build the latency graph $\mathcal{\hat{G}}$\;
        Generate partition strategy $\mathcal{C}$ and $c_i(a_i^t, A^t)$ \; %
        \If{$c_i(a_i^t, A^t) = 0$}{
        $a_i^{t+1} \gets 0$\;
        \FInference{$\mathcal{C},0$}
        }
        \Else{
        Update $a_i^{t+1}$ using Eq. \eqref{eq:perfect-goal-itera} and Eq. \eqref{eq:perfect-goal-iterb}\;
        $ A^{t+1} \gets$ \FInference{$\mathcal{C},a_i^{t+1}$}\;
        }
        }
        }
        }

        \Fn{\FInference{$\mathcal{C}, a_i^{t+1}$}}{
        \If{$a^t_i = 0$}{
                Executing inference task $\mathcal{G}$ locally\;
        }
        \Else{
        $\mathcal{G}_l \gets$ partition the model $\mathcal{G}$ using $\mathcal{C}$\;
        Executing inference task using $\mathcal{G}_l$\;
        Upload $\mathcal{C}$, output of $\mathcal{G}_l$ and $a_i^{t+1}$ to server\;
        Receive inference result and $A^{t+1}$ from server\;
        }
        }
\end{algorithm}

Theorem \ref{th:ppad-hard} highlights the challenge of identifying an algorithm capable of precisely determining the value of NE. Consequently, we adopt a gradient-based iterative algorithm to approximate NE during inference. Beginning with an initial $a_i^0$, the algorithm's core concept involves each mobile device $d_i$ updating its $a_i^t$ based on the term $A^t$ at iteration $t$. The modification of $a_i^t$ subsequently triggers changes in $A^{t+1}$ during the next iteration. This process iterates until both $a_i$ and $A$ converge.

The choice of a gradient-based algorithm is justified for several reasons. Firstly, gradient-based algorithms have showcased remarkable effectiveness across a wide array of empirical challenges, encompassing machine learning, deep learning, and game theory \cite{balduzzi2018mechanics}, all of which frequently present complex optimization problems. Secondly, each part in $\mathcal{L}$ can be interpreted as the sum of a reciprocal function and a linear function, as depicted in Eq. \eqref{eq:cost-function-lp} and Fig. \ref{fg:problem-curve}. Thus, the cost function $\mathcal{L}$ is differentiable, except at piecewise points. The proof procedure of Lemma \ref{th:A->a} demonstrates that these piecewise points cannot represent the optimal $a^\star$. Consequently, we can assign a subgradient value to these piecewise points, an approach extensively adopted within the deep learning community (e.g., ReLU \cite{agarap2018deep}). Thirdly, the computation cost for a gradient-based algorithm is minimal due to the design where a mobile device receives a scalar $A^t$ from the server and does not directly interact with others. All parameters in Eq. \eqref{eq:cost-function-lp} are scalars, not tensors, thereby rendering the gradient computation lightweight.

The gradient $\nabla a_i$ can be calculated from Eq. \eqref{eq:cost-function-lp} as follows:
\begin{equation} \label{eq:perfect-goal-itera}
        \nabla a_i =  \frac{\mathrm{d} \mathcal{L}}{\mathrm{d} a_i } = \gamma - \frac{c_i(a_i, A)}{\max(1, A) {g_i}^2 }.
\end{equation}
Here, $c_i(a_i, A):= \sum_{(l, v) \in (\mathcal{\hat{E}}' \cup \mathcal{\hat{C}}) } f(v)$ is analogous to $a_i^\star$ as defined in Lemma \ref{th:A->a}. $\mathcal{\hat{C}}$ represents the min-cut of $\mathcal{\hat{G}}$ when $a_i$ and $A$ are given. $c_i(a_i, A)$ physically signifies the sum of FLOPs required by the layers deployed on the MEC server, calculated using an efficient minimum edge cut algorithm. We discuss computational complexity later in Section \ref{sec:complexity}. The execution time of DNN inference is relatively short. For instance, YOLOv2 inference takes roughly 2s and 0.24s on the mobile device and the cloud server, respectively \cite{infocom19:mincut}. The gradient-based algorithm is efficient enough and minimally impacts inference performance.

Since the cost function is non-convex, $a_i$ is updated with momentum in order to jump over the local minimum point. The update formula is written as follows:
\begin{equation} \label{eq:perfect-goal-iterb}
        \begin{array}{cl}
                \nu^t   & = \rho \nu^{t-1} + (1-\rho) \nabla a^{t}_i \\
                a_i^{t} & = a_i^{t-1} - \eta \nu^t \text{,}
        \end{array}
\end{equation}
where $\nu^t$ is the momentum at iteration $t$, $\rho$ is an exponential decay factor and $\eta$ is the learning rate.

Only using the gradient-based algorithm to update $a_i$ can lead to an incorrect result. When the current $A$ is large, $a_i$ will decrease as $\nabla a > 0$. After updating $a_i$, the corresponding partition strategy may execute the entire DNN locally, namely $c_i(a_i, A) = 0$. Thus, $\nabla a$ will always equal to $\gamma > 0$, and $a_i$ will reach $0$ in the end even though $A$ has decreased in the following iterations.

\textit{Resource Sniff} is proposed to deal with this problem. When $c_i(a_i, A) = 0$, the mobile device $d_i$ do not use the computing resource and locally execute the inference task, namely $a_i=0$. The mobile device $d_i$ will periodically run a grid search for $a_i$ after every several iterations until a better $a_i>0$ is found. Since there is only one parameter $a_i$ during the grid search, the sniff operation is also quite lightweight. The full version of the iterative approximation algorithm is shown in Algorithm \ref{al:dp-homo}. Note the term $A^{t+1}$ is received along with the current inference result (Line 19), thus DDS introduces nearly zero additional communication considering the size of the floating number $A$.

\subsection{ Computational Complexity} \label{sec:complexity}
In this section, we examine the computational complexity of the DDS framework. The DDS workflow incorporates two additional computational costs beyond the inference task. Firstly, the computation associated with estimating the computing resources $\hat{a}_i^t$ (as per Line 11 in Algorithm \ref{al:dp-homo}) and updating the requested computing resource $a_i^{t+1}$ for the subsequent inference (Line 18 in Algorithm \ref{al:dp-homo}). Secondly, the computation of determining the minimal edge cut $\mathcal{C}$, essentially the optimal partition strategy, for DNN graph $\mathcal{G}$.

The first computational cost involves Eq. \eqref{eq:res-proportion-mfg}, \eqref{eq:perfect-goal-itera}, and \eqref{eq:perfect-goal-iterb}. All parameters including $a_i$, $A$, $\gamma$, and $c_i(a_i, A)$ are scalars, not vectors. This characteristic stems from DDS's decentralized design: MEC server aggregates the actions of all other devices (corresponding to $A$), and each mobile device communicates solely with the MEC server. Thus, in comparison to other $N$-Player games \cite{balduzzi2018mechanics}, the decision-making process of mobile devices in DDS only interacts with $A$, provided by the MEC server. This feature suggests that the computational complexity of the first cost is $O(E)$, remaining constant as the number of mobile devices increases. Furthermore, the calculations in Eq. \eqref{eq:res-proportion-mfg}, \eqref{eq:perfect-goal-itera}, and \eqref{eq:perfect-goal-iterb} involve only a few concise algebraic operations, in contrast to a small fully-connected layer in a DNN, which could necessitate millions of such operations. Therefore, the computational cost of these equations can be considered negligible.

The second computational cost involves calculating the minimum cut of the latency graph $\mathcal{\hat{G}}$ from the given DNN $\mathcal{G}$. As this computation does not include other mobile devices, the complexity relates solely to the properties of $\mathcal{G}$. The classic Boykov's algorithm \cite{boykov2004experimental} presents a complexity of $O\left((V + E)V^2\right)$, where $V$ and $E$ denote the number of vertices and edges in the latency graph $\mathcal{\hat{G}}$ respectively. Due to the property of the DNN graph containing multiple cutting vertices, which divide the DNN into several subgraphs (e.g., ResBlock, InceptionBlock), the computational complexity has been reduced to $O\left((k+1)(\hat{V}+\hat{E}) \hat{V}^2\right)$, where $k$ represents the number of cutting vertices in DNN $\mathcal{G}$. $\hat{V}$ and $\hat{E}$ signify the number of vertices and edges in these subgraphs. For a more detailed exploration, we direct the reader to \cite{ubicom20:faster-infer}. The decision time of the optimized minimal edge cut algorithm is approximately 0.1 seconds even on resource-limited devices like RaspberryPi \cite{ubicom20:faster-infer}.

In summary, the computational complexity of DDS framework is $O\left(E + (k+1)(\hat{V}+\hat{E}) \hat{V}^2\right)$ for DNN $\mathcal{G}$, and $O(1)$ for the number of mobile devices $N$.

\section{ Performance Evaluation}

In this section, we evaluate the performance of DDS.

\subsection{ Environment Setup}

\begin{table}[tb]
        \caption{Metrics Of Different DNNs } \label{tb:dnn-metrics}
        \centering
        \begin{tabular}{@{}l|c|c|c|c@{}}
                \toprule
                Model    & VGG11 & ResNet34 & ResNet50 & \ \ ViT \ \ \\ \midrule
                Vertices & 15    & 55       & 73       & 26          \\
                Edges    & 14    & 57       & 75       & 32          \\
                GFLOPs   & 7.63  & 3.68     & 4.12     & 3.47        \\ \bottomrule
        \end{tabular}
\end{table}

\subsubsection{ Platform}
The DDS framework is implemented using the PyTorch deep learning framework \cite{pytorch}. gRPC \cite{grpc} is utilized for efficient communication between mobile devices and the MEC server. The source code of PyTorch is modified based on \cite{narayanan2019pipedream} to automatically convert DNNs into a DAG structure. Evaluations of the DDS convergence performance with a large scale (up to 100) of mobile devices are simulated on a machine equipped with an Intel Xeon E5-2660 CPU and 256 GB memory. The configurations of these mobile devices, such as bandwidth, computing resource, and inference tasks, are randomly chosen from $5 \sim 10$ Mbps, $10 \sim 20$ GFLOPS, and several DNN models. The computing power of the MEC server is set to 1.2 TFLOPS.

\subsubsection{Tasks}
We employ four real-world DNNs: VGG11 \cite{vgg}, ResNet34 \cite{resnet}, ResNet50 \cite{resnet}, and the Visual Transformer (ViT) \cite{vit}. All input data for these models are sampled from the ImageNet dataset \cite{imagenet}. The metrics of these DNNs, including the number of vertices (neural layers), edges, and required GFLOPs, are listed in Table \ref{tb:dnn-metrics}.

\subsubsection{Baseline}
Three baselines are used in the experiment:
\begin{itemize}
        \item Edge Only (EO): The entire DNN is executed locally on the mobile device.
        \item Server Only (SO): All captured raw data are directly uploaded and the DNN is executed on the MEC server.
        \item Dynamic Adaptive DNN Surgery (DADS): A min-cut algorithm is used to find the best partition strategy \cite{infocom19:mincut} for fixed computing resources.
\end{itemize}
The computing resource allocated to each mobile device for the three baselines is set to $\frac{S}{N}$ during inference.

\begin{figure}[tb]
        \centering
        \subfloat[]{ \label{fg:convergence}
                \includegraphics[width=0.8\linewidth]{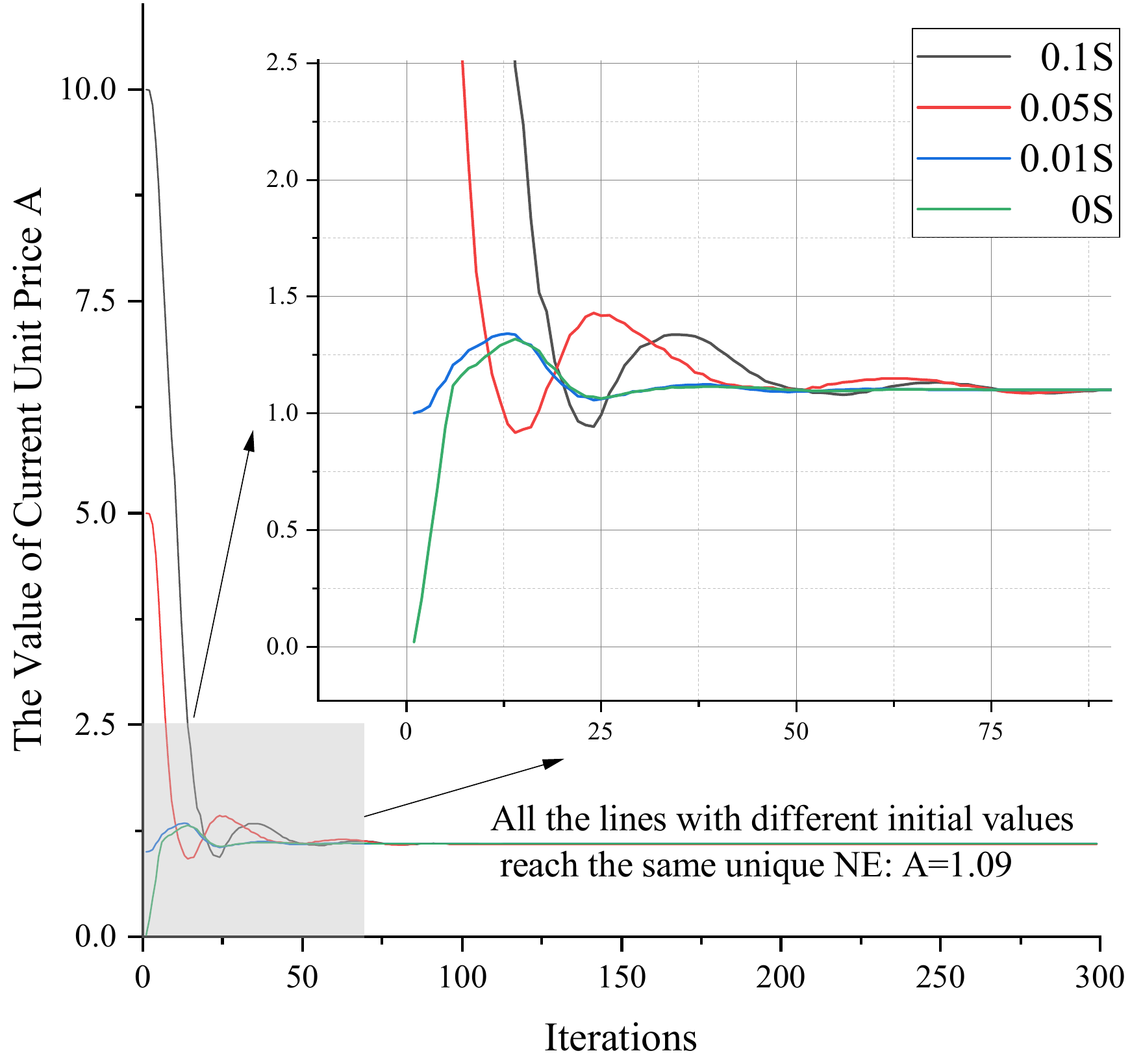}
        }

        \subfloat[] { \label{fg:node-detail}
                \includegraphics[width=0.8\linewidth]{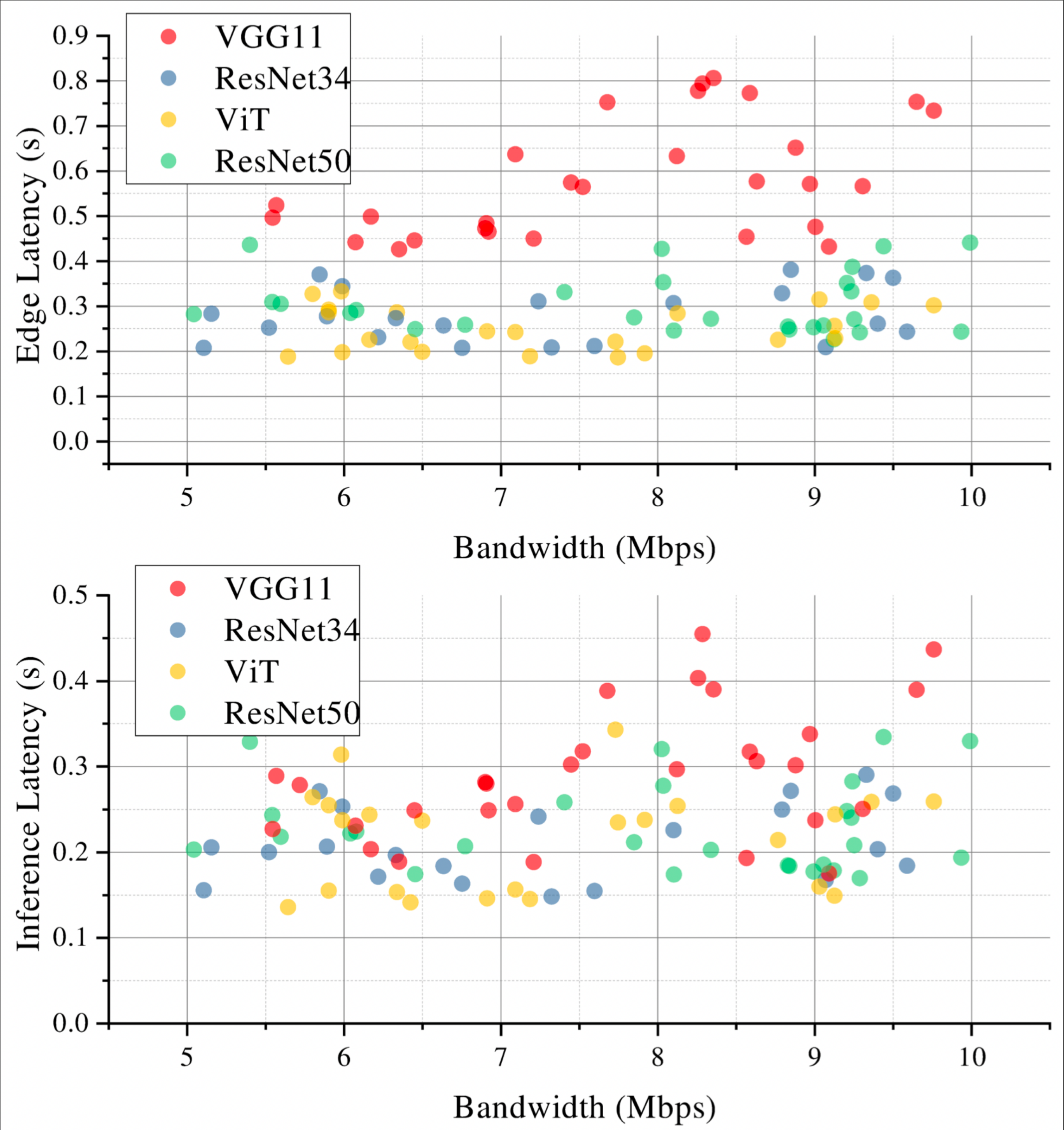}
        }
        \caption{(a) The value of $A^t$ with different initial values $a_0$ at different iterations from the perspective of a specific device. (b) The local inference latency and inference latency after using the service of MEC server for every mobile device.}

\end{figure}

\subsection{Convergence Performance}

To examine the convergence performance of our proposed iterative algorithm, we simulate a scenario with 100 mobile devices, all sharing the same initial budget $a^0$. Given the asynchronous nature of our algorithm across the devices, we randomly select one device, $d_i$, and track the value of $A^t$ as perceived by $d_i$ at varying iterations. We carry out simulations under four distinct settings of the initial $a^0$ and plot the results in Fig. \ref{fg:convergence}.

While gradient-based algorithms do not guarantee convergence to the unique NE, our simulation demonstrates that mobile devices with differing initial $a^0$ values rapidly reach consensus on $A^t$ within a few iterations. In our experiment, this consensus is the service price of $1.09$, which all mobile devices accept. Notably, the convergence process is expedient; it requires fewer than 25 iterations for convergence among 100 mobile devices when $a^0$ is set to $0$ or $0.01S$. Even when $a^0$ is assigned impractical values, such as $0.05S$ and $0.1$, convergence is still achieved within 50 iterations. These results underscore the efficacy of the proposed iterative approximation algorithm.

When $a^0$ is initially set to $0.1S$ or $0.05S$, $A^t$ declines rapidly at the onset of iterations due to the high charge, $\beta(a_i^t)$. If $a^0 = 0$, all mobile devices incrementally explore the MEC server's computing resources and select an appropriate $a_i^t$ for iteration, causing $A^t$ to gradually increase initially. It is worth noting that $A^t$ exhibits a fluctuating pattern before finally reaching the NE during iterations, as the momentum strategy is employed for updating $a^t_i$. This momentum strategy expedites $A^t$ optimization when the optimization direction is unambiguous (during the early stages of $0.1S$ and $0.05S$) and enhances convergence performance by circumventing some local minima.

\subsection{Resource Allocation Across Various Devices}

Fig. \ref{fg:node-detail} provides a detailed depiction of resource allocation within the DDS, encompassing 100 heterogeneous mobile devices. Each point within the figure represents a mobile device. Given the diversity in uplink bandwidths $b(d_i)$, local computing capacities $E(d_i)$, and inference tasks $\mathcal{G}_i$ among the devices, each point encapsulates multiple dimensions of information. The color of each point indicates the DNN $\mathcal{G}_i$ assigned to device $d_i$. The x-axis represents the bandwidth $b(d_i)$ of different devices, while the y-axis of the upper figure illustrates the inference latency if $\mathcal{G}_i$ is executed locally. The lower figure displays the inference latency post-utilization of the MEC service. The variation in inference latency between the two figures suggests the allocation of resources from the MEC server, responding to the diverse demands of the mobile devices.

We can find that the execution time of mobile devices differs for the four DNNs in the top figure. The inference of VGG11 is most time-consuming and takes between 0.5s and 0.8s. The other DNNs (ResNet34, ResNet50, and ViT) take around 0.3s as their required GFLOPs of them are similar. Meanwhile, the inference latency of VGG11 reduces up to 0.5s after adopting DDS. The mobile devices with VGG11 tasks are willing to reduce the high inference latency $T^\star$ with some additional service charge. The inference latencies of ResNet34, ResNet50, and ViT also decrease, but the reduction is not as obvious as that of VGG16. Since the GFLOPs of these DNNs are smaller than VGG11, the gain of reducing inference latency can not remedy the increase of charge. As a result, they use smaller computing resources.

\subsection{Average Inference Latency}

\begin{figure*}[tb]
        \centering
        \subfloat[Inference Latency]{ \label{fg:latency-detail-all}
                \includegraphics[width=0.23\linewidth]{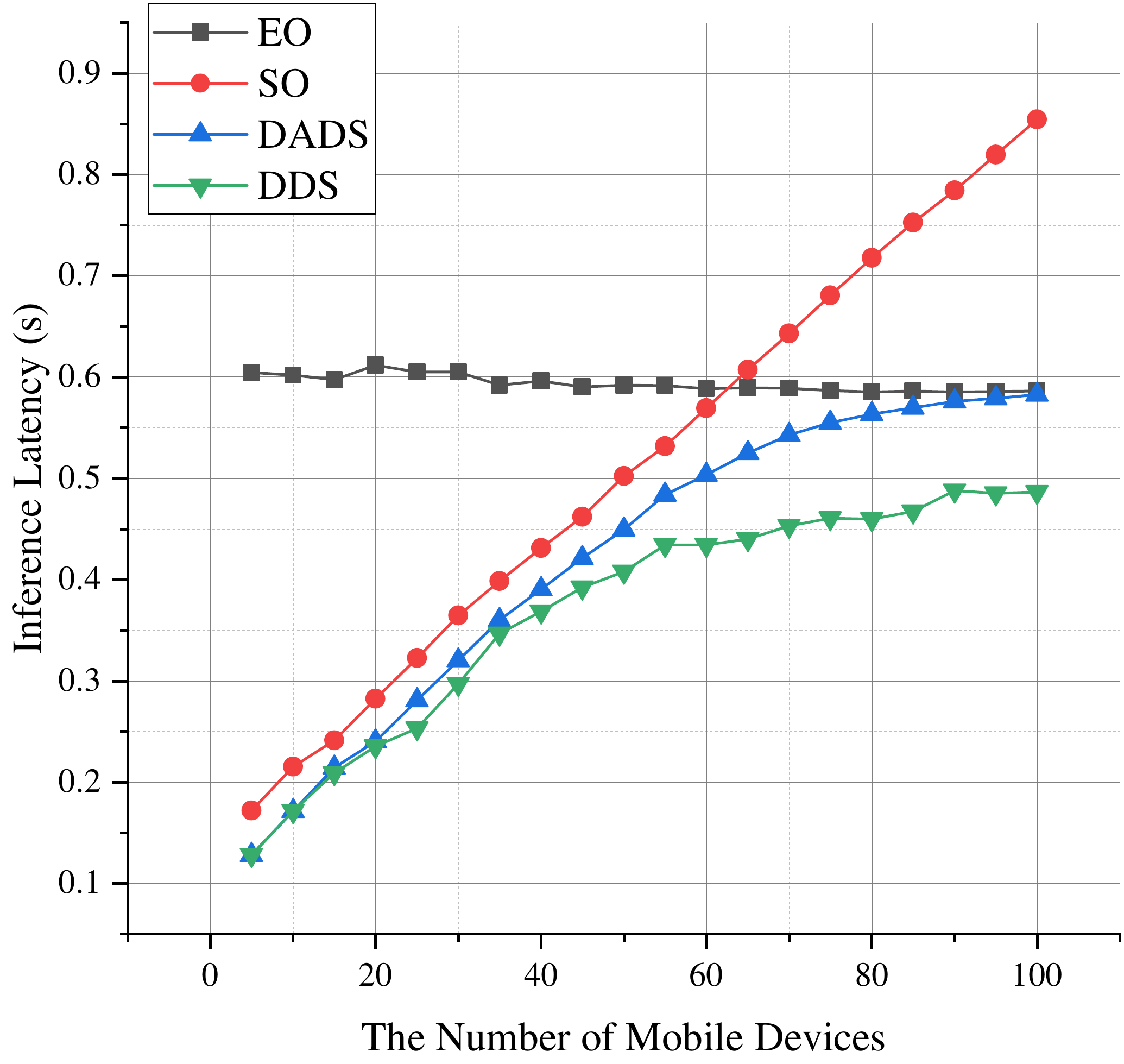}
        }
        \subfloat[Server Latency]{ \label{fg:latency-detail-cloud}
                \includegraphics[width=0.23\linewidth]{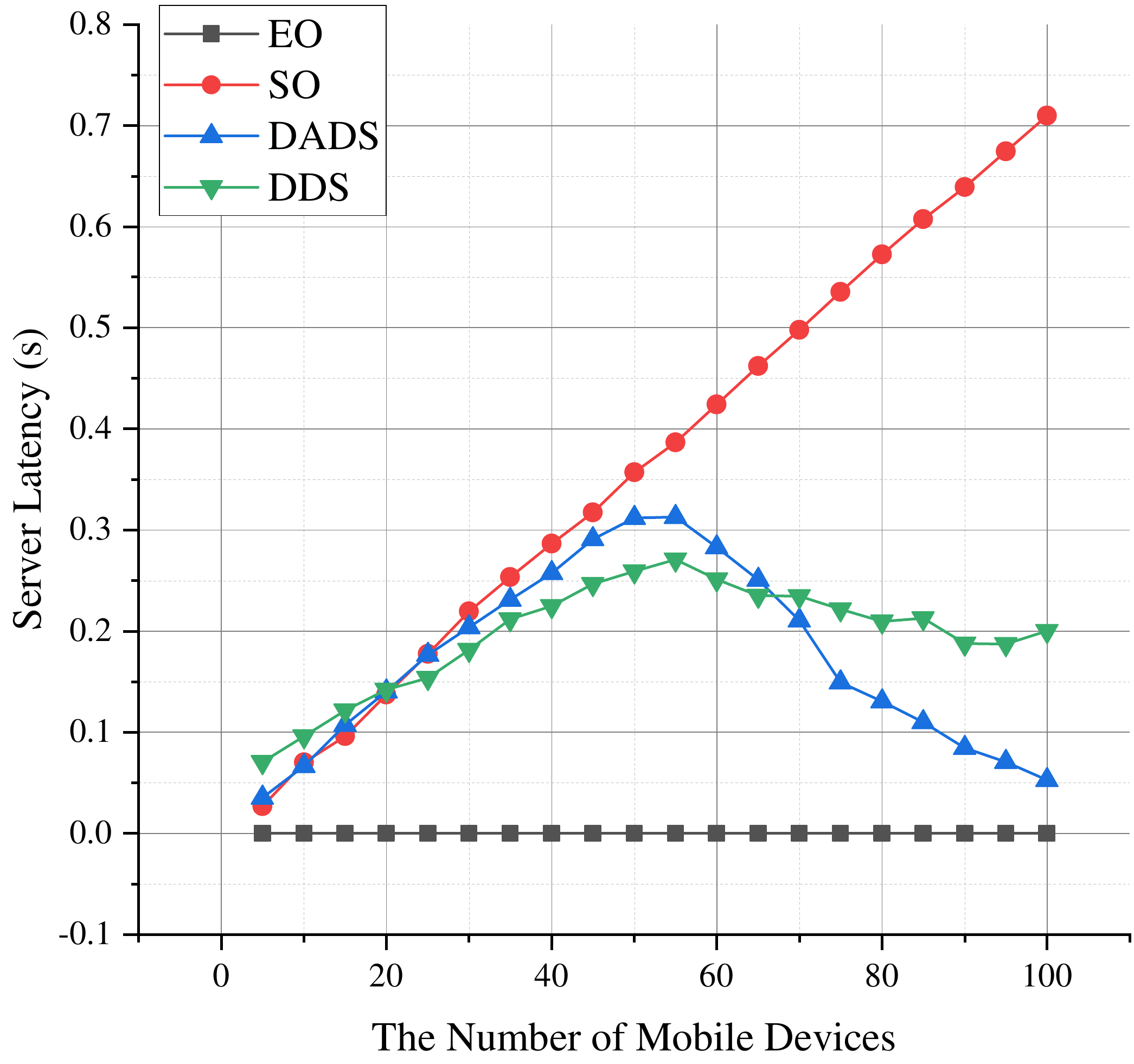}
        }
        \subfloat[Network Latency]{ \label{fg:latency-detail-comm}
                \includegraphics[width=0.23\linewidth]{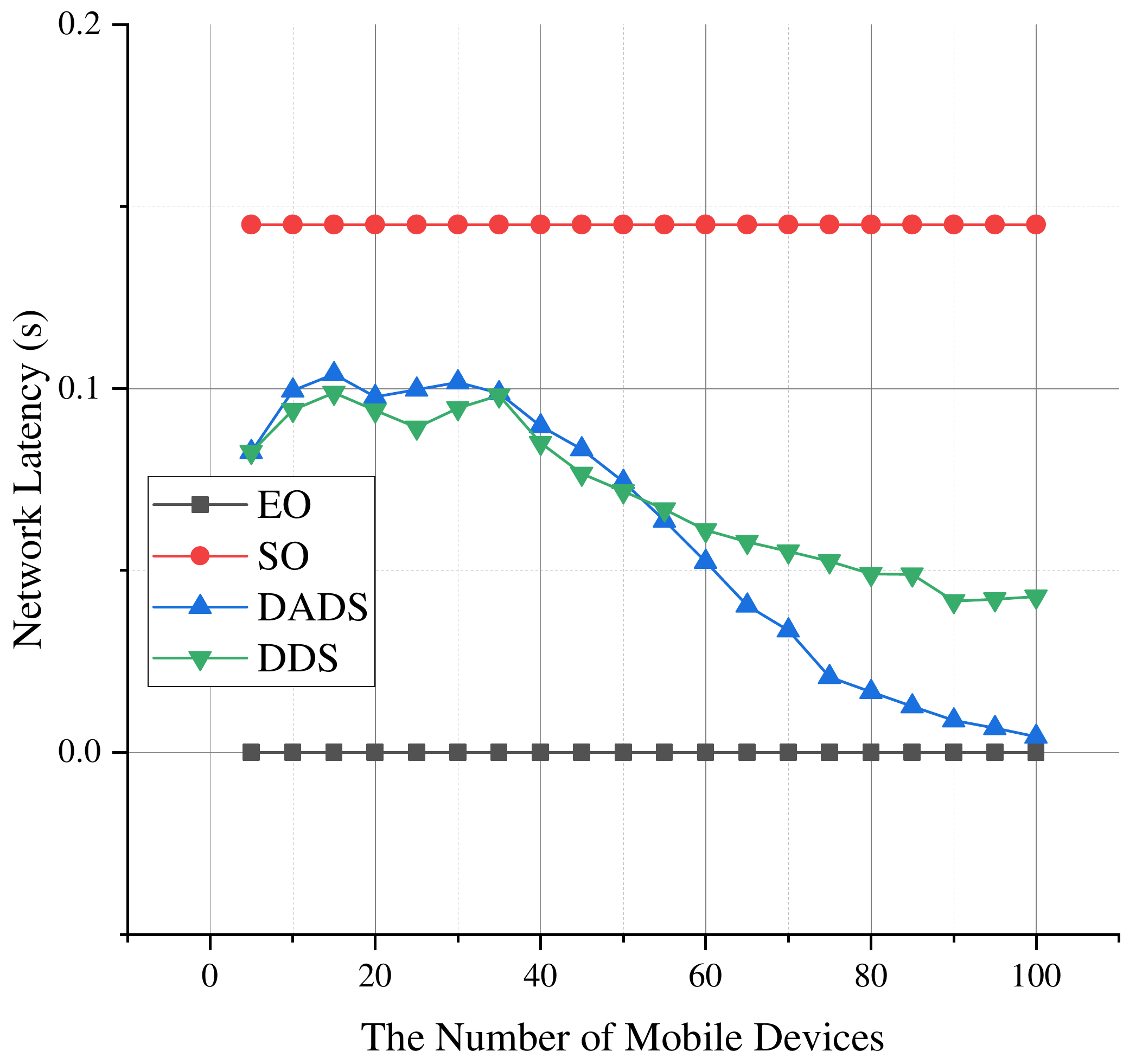}
        }
        \subfloat[Device Latency]{ \label{fg:latency-detail-edge}
                \includegraphics[width=0.23\linewidth]{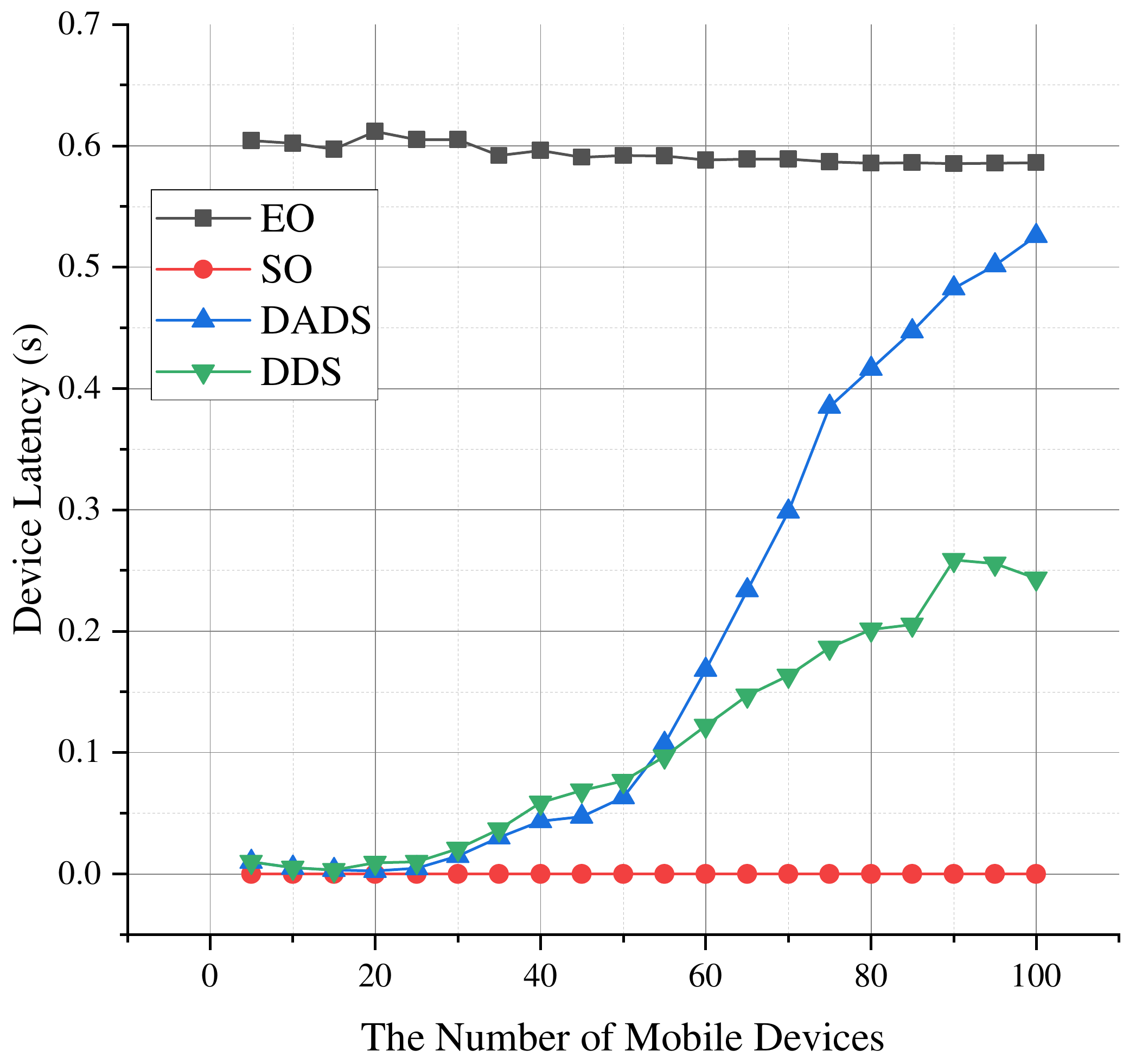}
        }
        \caption{Average latency trends as the number of mobile devices increases.
                (a): Overall inference latency $T$.
                (b): Server inference latency $T^s$.
                (c): Network latency $T^t$ for feature uploading.
                (d): Local device inference latency $T^l$ on the mobile device.
        }
        \label{fg:latency-detail}
\end{figure*}

We evaluate the average inference performance involving multiple mobile devices ranging from 5 to 100. Since the MEC server capacity is limited, the server workloads grows as the number of mobile devices increases. We capture the time cost at each inference stage. Figure \ref{fg:latency-detail} illustrates the trend of the average latency for all mobile devices using four distinct algorithms as the number of devices increases. Figure \ref{fg:latency-detail-all} displays the overall inference latency $T^\star$, which encompasses all stages. The remaining figures depict the server inference latency $T^s$, network latency $T^t$, and local device inference latency $T^l$, respectively.

The inference latency grows when the number of mobile devices increases except for EO. Since SO uploads all data to MEC server, the performance will quickly decrease when the workload of server grows. The latency of SO grows with a linear rate in Fig. \ref{fg:latency-detail-all} and is longer than EO when the number of devices exceeds $65$. DADS and DDS can adaptively adjust the partition strategy $\mathcal{C}$, thus the growth of inference latency is slower than SO. From Fig. \ref{fg:latency-detail-edge} and Fig. \ref{fg:latency-detail-cloud} we can find when the workload of MEC server becomes heavy (with $50$ devices), both algorithms gradually reduce the computation on the server and place more computation on the mobile devices. However, when the number of devices reaches $100$, DADS degenerates to EO. From Fig. \ref{fg:latency-detail-comm} we can find the average network latency is close to zero, which means it hardly uses the network and executes almost entire DNN locally. But DDS still benefits from MEC server, since the demands of every mobile device can be adaptively adjusted. Some devices with lightweight DNN and better local computing resources tend to reduce their demands and save server resources for more needed ones. Thus, it achieves a better latency performance than DADS (about $1.25 \times$).

\section{ Related Work}

\subsection{ DNN Surgery}
DNN surgery offers an efficient means to reduce inference latency without sacrificing accuracy. The concept was first introduced by Neurosurgeon \cite{sigarch:neuro}, which delegates parts of the DNN model to a more robust cloud platform. DADS \cite{infocom19:mincut} concentrates on finding optimal partition strategies for DNNs with Directed Acyclic Graph (DAG) structures, proposing a min-cut based algorithm. QDMP \cite{ubicom20:faster-infer} notes that applying the min-cut algorithm to the entire DNN can be time-consuming and puts forth an improved algorithm to reduce time complexity when finding the optimal strategy. However, these efforts only consider a single mobile device and assume that the computing resources allocated from the server remain fixed during optimization. Some researchers have employed multiple cooperative mobile devices to accelerate inference \cite{coedge,icdcs:pico,edgeflow}. For example, CoEdge \cite{coedge} deploys the full model on each device and divides the intermediate features. PICO \cite{icdcs:pico} organizes mobile devices into an inference pipeline by partitioning both neural layers and features. Similarly, EdgeFlow \cite{edgeflow} aligns with \cite{coedge} but can handle models with a graph structure. Despite these advancements, there is a dearth of research on accelerating inference for multiple uncooperative mobile devices.

\subsection{Task Offloading}
The offloading problem for general tasks involving multiple mobile devices has been studied for years \cite{tpds14:offloading, mc:offloading,infocom21:threshold,jsac:task-offloading}. The task models in \cite{tpds14:offloading,infocom21:threshold,jsac:task-offloading} are atomic and cannot be partitioned. \cite{jsac:task-offloading} considers the offloading problem in a Software Defined Networking (SDN) \cite{sdn} environment, with the offloading strategy determined by a central controller. However, since the inference typically takes less than a second, the centralized algorithm would double the network latency as the strategy must first be dispatched to mobile devices. \cite{mc:offloading} considers ideal offloading tasks that can be split in any proportion and provides an algorithm to find the Nash Equilibrium directly. However, many real-world tasks, such as inference tasks, do not satisfy this assumption. \cite{infocom21:threshold} employs an iterative algorithm to find the optimal threshold that, when the number of waiting tasks exceeds the threshold, requires devices to upload newly arrived tasks to the server. Moreover, the problems addressed in \cite{tpds14:offloading,mc:offloading,infocom21:threshold} are convex and easier to tackle.

\section{ Conclusion}

Current algorithms fall short in the MEC environment where numerous self-regulating mobile devices share the computing resources offered by an MEC server. To tackle this issue, our study reformulates the optimal partition strategy as a min-cut of a latency graph derived from the original DNN. Concurrently, we design a resource allocation problem to determine the optimal demand for each diverse mobile device and deduce the existence of the unique NE within the proposed game.

We introduce the Decentralized DNN Surgery (DDS) framework to efficiently and adaptively obtain the NE for every device in the MEC environment. The lightweight nature of DDS allows it to be deployed on each mobile device without introducing any additional communication overhead. Experimental results demonstrate the efficiency of DDS under various settings, showing a notable increase in average inference speed by 1.25 $\times$ compared to the state-of-the-art algorithm. Consequently, our study paves the way for a scalable and efficient solution for DNN inference in dynamic MEC scenarios.

\bibliographystyle{IEEEtran}
\bibliography{IEEEabrv,reference}

\end{document}